\documentclass{article} 
\usepackage{amsthm,amsmath,amssymb,amsfonts,authblk}
\usepackage{fullpage}
\usepackage[bookmarks=false, pdfstartview=FitH]{hyperref}

\usepackage{epsfig}
\usepackage{url}

\newtheorem{theorem}{Theorem}

\newtheorem{lemma}{Lemma}[section]

\newtheorem{corollary}{Corollary}[section]

\newtheorem{definition}{Definition}

\title{On Resource-bounded versions of the van Lambalgen theorem}
\author[1]{Diptarka Chakraborty\thanks{diptarka@iuuk.mff.cuni.cz}\footnote{The author was supported
    by the European Research Council under the European Union's Seventh Framework Programme (FP/2007-2013)/ERC Grant Agreement n. 616787.}}
\author[2]{Satyadev Nandakumar\thanks{satyadev@cse.iitk.ac.in}}
\author[3]{Himanshu Shukla\thanks{hshukla.math04@gmail.com}}
\affil[1]{Computer Science Institute of Charles University,
  Malostransk{\'e}  n{\'a}mesti 25,
  118 00 Praha 1, Czech Republic } 
  \affil[2,3]{Department of Computer Science and Engineering,
  Indian Institute of Technology Kanpur, 
  Kanpur, Uttar Pradesh, India.}

\newcommand{\N}{\mathbb{N}} 
\newcommand{\R}{\mathbb{R}}
\newcommand{\Q}{\mathbb{Q}}
\newcommand{\Rplus}{[0, \infty)}
\newcommand{\PP}{\mathcal{P}}
\newcommand{\QQ}{\mathcal{Q}}
\newcommand{\res}{\upharpoonright}

\newcommand{\mmid}{\quad\mid\quad}

\usepackage[svgnames]{xcolor}

\bibliographystyle{alpha}

\begin{document}
\maketitle

\begin{abstract}
The van Lambalgen theorem is a surprising result in algorithmic
information theory concerning the symmetry of relative randomness. It
establishes that for any pair of infinite sequences $A$ and $B$, $B$
is Martin-L\"of random and $A$ is Martin-L\"of random relative to $B$
if and only if the interleaved sequence $A \uplus B$ is Martin-L\"of
random. This implies that $A$ is relative random to $B$ if and only if
$B$ is random relative to $A$ \cite{vanLambalgen}, \cite{Nies09},
\cite{HirschfeldtBook}. This paper studies the validity of this
phenomenon for different notions of time-bounded relative
randomness. 

We prove the classical van Lambalgen theorem using martingales and
Kolmogorov compressibility. We establish the failure of relative
randomness in these settings, for both time-bounded martingales and
time-bounded Kolmogorov complexity. We adapt our classical proofs when
applicable to the time-bounded setting, and construct counterexamples
when they fail. The mode of failure of the theorem may depend on the
notion of time-bounded randomness.
\end{abstract}

\section{Introduction}
In this paper, we explore the resource-bounded versions of van
Lambalgen's theorem in algorithmic information theory. van Lambalgen's
theorem deals with the symmetry of relative randomness. The theorem
states that an infinite binary sequence $B$ is Martin-L\"of random and
a sequence $A$ is Martin-L\"of random relative to $B$ if and only if
the interleaved sequence $A_0 B_0 A_1 B_1 \dots$ is Martin-L\"of
random \cite{vanLambalgen}. It follows that $A$ is Martin-L\"of random
relative to $B$ if and only if $B$ is Martin-L\"of random relative to
$A$.

This result is quite surprising, since it connects the randomness of
$A$ with the computational power $A$ possesses \cite{Nies09},
\cite{HirschfeldtBook}. This contrasts with relative computability -
for instance, every computably enumerable language is computable given
the halting problem as an oracle, but the halting problem is not
computable given an arbitrary c.e. language. Symmetry of relative
randomness is desirable for any robust notion of randomness. However,
we now know that it fails in several other settings - both Schnorr
randomness and computable randomness exhibit a lack of symmetry of
relative randomness \cite{Yu07, bauwens2015uniform}.

We explore whether this symmetry holds when Martin-L\"of randomness is
replaced with time-bounded randomness. Considering the failure of the
analogies of the van Lambalgen's theorem in many settings, it is
natural to guess that such a resource-bounded version of van
Lambalgen's theorem is false. Indeed, the existence of \emph{one-way
  functions} \cite{Goldreich01} from strings to strings which are easy
to compute but hard to invert, can be expected to have some bearing to
the validity of the resource-bounded van Lambalgen's theorem. In the
context of polynomial-time compressibility, Longpr\'{e} and Watanabe
\cite{LongpreWatanabe95} establish the connection between
polynomial-time symmetry of information and the existence of one-way
functions, and analogously, Lee and Romaschenko
\cite{LeeRomaschenko05} establish the connection for $CD$ complexity
\cite{LiVi08}.

Modern proofs of van Lambalgen's theorem proceed by defining Solovay
tests (see \cite{HirschfeldtBook}, \cite{Nies09}). The notion of a
resource-bounded Solovay test has not been studied, while the notion
of resource-bounded martingales \cite{Lutz:RM} and resource-bounded
Kolmogorov complexity have been studied extensively (see Allender
et. al. \cite{ABKMR06}). We approach the classical van Lambalgen's
theorem using prefix-free incompressibility and martingales, inspired
by the Solovay tests. This part may be of independent interest. We
then attempt to adapt these proofs to resource-bounded settings.

Our main results are the following. Let $t$ be a superlinear time
bound, and $t^X$ denote $t$-computable functions with oracle access
to the sequence $X$. Let $A \uplus B$ denotes the interleaving of $A$
and $B$.

\begin{enumerate}
\item \label{item1} Using the notion of $t$-bounded martingales, we
  show that there are $t$-nonrandom $A \uplus B$ where $B$ is
  $t$-random and $A$ is $t^B$-random. This result is unconditional,
  and analogous to the result of Yu \cite{Yu07}.
\item \label{item2} 
\begin{enumerate}
\item There are $t$-random sequences $A$ and $B$ where $A$ is
  $t^B$-nonrandom, but $B$ is $t^A$-random. However for this pair, $A
  \uplus B$ is still $t$-nonrandom. Thus the randomness of the
  interleaved sequence and mutual relative randomness of the pair are
  distinct notions for time-bounded martingales. 
  \item We establish a sufficient condition under which a $t$-random 
  $B$ and a $t^B$-nonrandom $A$ could still create $t$-random $A 
  \uplus  B$. This involves a non-invertibility condition reminiscent
  of one-way functions.
\end{enumerate}
\item \label{item3}There are $t$-compressible $A \uplus B$ such that
  $B$ is $t$-incompressible and $A$ is $t$-incompressible relative to
  $B$. This is an unconditional result analogous to~\ref{item1}.
\item \label{item4}If $B$ is $t$-compressible or $A$ is
  $t$-compressible with respect to $B$, then $A \uplus B$ is
  $t$-compressible. This is in contrast to~\ref{item2}.
\end{enumerate}
Thus van Lambalgen's theorem fails in resource-bounded
settings. Surprisingly, the manner of failure may depend on the
formalism we choose.  

The results in the paper also provide indirect evidence that
resource-bounded randomness may vary depending on the formalism. In
particular, the set of sequences over which resource-bounded
martingales fail may not be the same as the set of resource-bounded
incompressible sequences. The results in~\ref{item2} and~\ref{item4}
provide us a conditional separation between these two formalisms in 
case of resource-bounded settings.

The manner of failure in~\ref{item2} has to do with the oracle access
mechanism, and the proof hinges on a technical obstacle which may be
tangential to time-bounded computation. In the final section of the
paper, we propose a modified definition which we call $t$-bounded
``lookahead'' martingales with which we are able to show that if $B$
is $t$-lookahead-nonrandom or $A$ is $t$-lookahead-nonrandom relative
to $B$, then $A \uplus B$ is $t$-lookahead nonrandom. Here, the van
Lambalgen property for $t$-lookahead martingales fails in precisely
the same manner as $t$-incompressibility. This may be a reasonable
model to study resource-bounded martingales.

\section{Preliminaries}
We assume familiarity with the basic notions of algorithmic randomness
at the level of the initial chapters in Downey and Hirschfeldt
\cite{HirschfeldtBook} or Nies \cite{Nies09}.

We use the notation $\N$ for the set of natural numbers, $\Q$ for
rationals, and $\R$ for reals. We work with the binary alphabet
$\Sigma = \{0,1\}$. We denote the set of finite binary strings as
$\Sigma^{*}$ and the set of infinite binary sequences as
$\Sigma^{\infty}$. Finite binary strings will be denoted by lower-case
Greek letters like $\sigma$, $\rho$ etc. and infinite sequences by
upper-case Latin symbols like $X$, $Y$ etc. The length of a string
$\sigma$ is denoted by $|\sigma|$. The letter $\lambda$ stands for the
empty string. For finite strings $\sigma$ and $\rho$ and any infinite
sequence $X$, $\sigma \preceq \rho$ and $\sigma \preceq X$ denote that
$\sigma$ is a prefix of $\rho$ and $X$ respectively.

The substring of length $n$ starting from the $m^\text{th}$ position
of a finite string $\sigma$ or an infinite sequence $X$ is denoted by
$\sigma[m \dots m+n-1]$ and $X[m \dots m+n-1]$, where $m+n-1 <
|\sigma|$. When $m$ is 0, \emph{i.e.}  the first position, we
abbreviate the notation as $\sigma \res n$ and $X \res n$ -
e.g. $\sigma \res n$ is $\sigma[0 \dots n-1]$.

The concatenation of $\sigma$ and $\tau$ is written as
$\sigma \tau$.The notation $A \uplus B$ stands for the sequence
we get by interleaving the bits in $A$ with the bits in $B$,
\emph{i.e.} $A_0 B_0 A_1 B_1 \dots$.\footnote{It is also common to use
  $\oplus$, but we want to avoid confusion with the bitwise xor
  operation.}

A set of finite strings $S$ is said to be \emph{prefix-free} if no
string in $S$ can be a proper prefix of another string in $S$.

\begin{theorem}{(van Lambalgen, 1987) \cite{vanLambalgen}}
\label{thm:cevlt}
For any two infinite sequences $A$ and $B$, $B$ is Martin-L\"of random
and $A$ is Martin-L\"of random relative to $B$ if and only if
$A \uplus B$ is Martin-L\"of random.
\end{theorem}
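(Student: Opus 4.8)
The plan is to prove both directions using the characterization of Martin-L\"of randomness via Martin-L\"of tests, exploiting the fact that the paper intends (per the introduction) to give a proof ``inspired by the Solovay tests'' using prefix-free incompressibility and martingales. I would organize the argument around the bijective correspondence between the interleaved sequence $A \uplus B$ and the pair $(A, B)$: a clopen event in $\Sigma^\infty$ for $A \uplus B$ restricted to the first $2n$ bits corresponds exactly to a pair of clopen events on $A \res n$ and $B \res n$, and a uniform measure on $A \uplus B$ pushes forward to the product of the uniform measures on $A$ and $B$. This measure-preserving correspondence is what makes the theorem work and should be stated as a preliminary observation.

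\medskip

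\noindent\textbf{The ($\Leftarrow$) direction.} Assume $A \uplus B$ is Martin-L\"of random; I would show $B$ is Martin-L\"of random and $A$ is Martin-L\"of random relative to $B$. For $B$: given a Martin-L\"of test $\{U_n\}$ for $B$, lift it to a test $\{V_n\}$ for $A \uplus B$ by taking $V_n$ to be the set of sequences $X$ whose even-indexed bits form a sequence with a prefix in (a prefix-free generator of) $U_n$; since the even bits are ``free'' the measure of $V_n$ equals the measure of $U_n$, so $\{V_n\}$ is a valid test, and $A \uplus B \notin \bigcap V_n$ forces $B \notin \bigcap U_n$. For $A$ relative to $B$: this is the more delicate half. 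Given a $B$-oracle Martin-L\"of test $\{U_n^B\}$ capturing $A$, one cannot directly decode it to a plain test for $A \uplus B$ because it uses $B$ as an oracle. The standard fix is to ``integrate out'' the oracle: define $V_n$ to be the set of $X$ such that, writing $X = Y \uplus Z$, we have $Y \res m \in U_n^{Z \res k}$ for appropriate $m,k$, and bound the measure of $V_n$ by $\int 2^{-n}\, d\mu(Z) = 2^{-n}$ using Fubini. One must be careful that $U_n^{Z\res k}$ is well-defined from finitely much of $Z$ (use the convention that the test enumerates requests $(\sigma,\tau)$ meaning ``$\sigma \in U_n$ provided the oracle extends $\tau$''), and that the resulting $V_n$ is uniformly $\Sigma^0_1$. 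Then $A \uplus B \notin \bigcap V_n$ yields $A \notin \bigcap U_n^B$.

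\medskip

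\noindent\textbf{The ($\Rightarrow$) direction.} Assume $B$ is Martin-L\"of random and $A$ is Martin-L\"of random relative to $B$; I would show $A \uplus B$ is Martin-L\"of random. Suppose for contradiction that $\{V_n\}$ is a Martin-L\"of test with $A \uplus B \in \bigcap_n V_n$. For each $n$ and each finite string $\tau$ (a candidate prefix of $B$), form the ``section'' $V_n^\tau = \{ Y : Y \uplus \tau' \in V_n \text{ for some } \tau \preceq \tau'\}$, or more cleanly work with the fibers of $V_n$ over $B$. The key quantitative step: since $\mu(V_n) \le 2^{-n}$ and the measure factors as a product, by a Markov/counting argument the set of $\tau$ of a given length for which the corresponding $A$-section has measure exceeding $2^{-n/2}$ (say) has measure at most $2^{-n/2}$. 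Using this, I would build \emph{either} a Martin-L\"of test relative to $B$ that captures $A$ — by letting $W_k$ be the $B$-section of $V_{2k}$, which has measure $\le 2^{-k}$ for all but a test-set of ``bad'' $B$'s — \emph{or}, if $B$ happens to always be ``bad'' (its sections are too large), a plain Martin-L\"of test capturing $B$ itself, namely the set of $\tau$ witnessing large sections, which has small measure by the counting argument. In either case we contradict one of the two hypotheses. This dichotomy ``$A$ is captured, or else $B$ is captured'' is the heart of van Lambalgen's theorem.

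\medskip

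\noindent The main obstacle I anticipate is making the relativized test $\{V_n\}$ in the ($\Leftarrow$) direction genuinely $\Sigma^0_1$ (not merely $\Sigma^0_2$) and making the section/fiber constructions in the ($\Rightarrow$) direction uniformly effective, including the bookkeeping that converts ``measure of bad oracles is small'' into an actual nested test $\bigcap_k W_k$. The Fubini-style measure estimate is conceptually routine, but threading the effectivity through it — ensuring every set constructed is open and uniformly enumerable from the given tests — is where the real care is needed, and it is precisely this effectivity that will break when one passes to the resource-bounded analogues considered later in the paper.
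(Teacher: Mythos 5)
Your proposal is essentially the classical test-based proof of van Lambalgen's theorem (as in Nies or Downey--Hirschfeldt): lift tests through the measure-preserving bijection between $A \uplus B$ and $(A,B)$, integrate out the oracle with Fubini for one direction, and run the Markov/section dichotomy (``either $A$ is captured relative to $B$, or the bad oracles form a test capturing $B$'') for the other. That outline is correct, and you rightly flag the two places needing care: the oracle test must have its measure bound uniformly in the oracle (or be truncated so that it does) before Fubini applies, and the passage from ``$\sum_n \mu(S_n) < \infty$'' to ``$B$ lies in only finitely many $S_n$'' goes through the Solovay-test characterization rather than a plain nested test. However, this is a genuinely different route from the paper, which deliberately avoids Martin-L\"of tests. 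The paper proves the theorem twice: once via prefix-free incompressibility, where the forward direction rests on a symmetry-of-information lemma (Lemma \ref{lem:symmetry_of_information}, built from the Coding Theorem and the Machine Existence Theorem) and the converse on explicit concatenated prefix codes such as $\QQ_{m,n}$; and once via martingales, where a martingale on $B$ is extracted by averaging $d_{AB}$ over all interleaved $A$-parts, and relative non-randomness of $A$ is extracted via the savings-account trick and measures of impossibility (Theorem \ref{thm:measureimposs_rel}). Your approach buys the cleanest and most standard classical argument; the paper's buys proof ingredients (prefix codes, martingale marginalization) that can be individually transplanted to, or shown to fail in, the resource-bounded setting that is the actual subject of the paper --- which is exactly why it does not use tests. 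Your closing remark that the effectivity is ``precisely what breaks'' in the resource-bounded analogues is on target in spirit, but the paper locates the breakage more specifically in symmetry of information and in the oracle-access timing, not in test enumerability.
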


\section{A Proof using Incompressibility}
We now prove Theorem \ref{thm:cevlt} via incompressibility
notions. Throughout the remainder of the paper, we fix a canonical set
of prefix-free codes for partial computable functions by
$\mathcal{P}$.
\begin{definition}
The \emph{self-delimiting Kolmogorov complexity} of $\sigma \in
\Sigma^*$ is defined by $ K(\sigma) = \min \{ |\pi| \mid \pi \in \PP
\text{ outputs } \sigma\}$.

Similarly, the \emph{conditional Kolmogorov complexity} of $\sigma \in 
\Sigma^*$ given $\tau \in \Sigma^*$ is defined by $ K(\sigma 
\mid \tau) = \min \{ |\pi| \mid \pi \in \PP \text{ outputs } \sigma 
\text{ on input } \tau \}$.
\end{definition}

Using the notion of incompressibility, it is well-known that we can
formulate an equivalent definition of random sequences \cite{Nies09}.
\begin{definition}
An infinite binary sequence $A$ is said to be \emph{incompressible} if
$\exists c \quad \forall n \quad K(A \res n) \ge n - c$.  The sequence
$A$ is \emph{incompressible with respect to} another binary sequence
$B$ (or $B$-incompressible) if $\exists c \quad \forall n \quad K^B(A
\res n) \ge n - c$. 
\end{definition}
The set of Martin-L\"of random sequences are precisely the set of
incompressible sequences. Relativizing the same result, the set of
Martin-L\"of random sequences relative to a sequence $B$ is
precisely the set of sequences incompressible with respect to $B$.

We now prove van Lambalgen's theorem using incompressibility. When we
consider the issue of resource-bounded van Lambalgen's theorems, we
try to either adapt these proofs where applicable, or examine the
issues which prevent such an adaptation. We prove the two directions 
of the van Lambalgen's theorem separately so as to emphasize the 
issues which arise in the resource-bounded setting.

The proof of the first direction relies on a form of Symmetry of
Information, a result first established by Levin and G\'{a}cs
\cite{LiVi08}. To this end, we mention basic results from the theory
of self-delimiting (prefix-free) Kolmogorov complexity.

\begin{definition}
A computably enumerable set $L \subseteq \Sigma^* \times \N$ is said
to be a \emph{bounded request set} if $\sum_{(\sigma,n) \in L}
\frac{1}{2^{n}} \le 1$.
\end{definition}

We may view each element $(w,n)$ as a request to encode $w$ using at
most $n$ bits. The boundedness condition is a promise that the
requested code lengths satisfy the Kraft inequality. The Machine
Existence Theorem states that there is some prefix-free code which can
satisfy all requests in a bounded request set.

\begin{theorem}{(Machine Existence Theorem)\cite{Nies09}}
Let $L$ be a bounded request set. Then there is a prefix-free set of
codes $\PP$ which, for each $(y,m) \in L$, allocates a prefix-free
code $\tau \in \Sigma^m \cap \PP$ for $y$.
\end{theorem}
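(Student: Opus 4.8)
The plan is to run the online Kraft--Chaitin construction, building $\PP$ while the requests are being enumerated. First I would fix a computable enumeration $(y_0,m_0),(y_1,m_1),\dots$ of $L$, which exists since $L$ is c.e. Writing $[\sigma] := \{X \in \Sigma^{\infty} : \sigma \preceq X\}$ for the basic cylinder of a string $\sigma$, and letting $\mu$ be the uniform measure on $\Sigma^{\infty}$ (so $\mu([\sigma]) = 2^{-|\sigma|}$), I will process the requests in this order while maintaining an ``available region'' $V_s \subseteq \Sigma^{\infty}$, starting from $V_0 := [\lambda]$, under the invariant that \emph{$V_s$ is a finite disjoint union of cylinders whose lengths are pairwise distinct}. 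When the $s$-th request $(y_s,m_s)$ is read I will carve a single cylinder $[\tau_s]$ of length $m_s$ out of $V_s$, commit $\tau_s$ permanently as a code for $y_s$, and put $V_{s+1} := V_s \setminus [\tau_s]$. The code set is then $\PP := \{\tau_s : s \in \N\}$, realised as the partial computable machine that on input $\tau_s$ outputs $y_s$: prefix-freeness is immediate because the cylinders $[\tau_s]$ are pairwise disjoint by construction, every $(y,m) \in L$ has received some $\tau \in \Sigma^m \cap \PP$ decoding to $y$, and since the whole procedure is effective in any index for $L$ this is a legitimate prefix-free machine (so in particular $K(y) \lep m$ for all $(y,m) \in L$, with the constant depending only on an index for $L$ --- the form in which the theorem is usually applied).

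Feasibility of the carving step is exactly where the boundedness of $L$ enters: since $\sum_{(y,n) \in L} 2^{-n} \le 1$, at stage $s$ we have $\mu(V_s) = 1 - \sum_{j < s} 2^{-m_j} \ge 2^{-m_s}$, so there is always enough measure left to serve the request. The technical core of the argument --- and the step I expect to be the main obstacle --- is therefore the following allocation lemma: \emph{if $V$ is a finite disjoint union of cylinders of pairwise distinct lengths and $2^{-m} \le \mu(V)$, then there is a cylinder $[\tau] \subseteq V$ with $|\tau| = m$ such that $V \setminus [\tau]$ is again a finite disjoint union of cylinders of pairwise distinct lengths.} Applying this at every stage both produces $\tau_s$ and re-establishes the invariant, so an easy induction finishes the construction.

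To prove the lemma I would argue by cases on how $m$ compares to the distinct available lengths $\ell_1 < \ell_2 < \cdots < \ell_k$, with $[\rho_i]$ the unique cylinder of length $\ell_i$. If $m = \ell_i$ for some $i$, take $\tau := \rho_i$; then $V \setminus [\tau]$ has lengths $\{\ell_j : j \ne i\}$ and the invariant survives. If $m > \ell_1$ but $m \notin \{\ell_1,\dots,\ell_k\}$, let $i^*$ be the largest index with $\ell_{i^*} < m$, set $d := m - \ell_{i^*} \ge 1$, and split
\[
[\rho_{i^*}] \;=\; [\rho_{i^*}0^{d}] \,\sqcup\, [\rho_{i^*}0^{d-1}1] \,\sqcup\, \cdots \,\sqcup\, [\rho_{i^*}01] \,\sqcup\, [\rho_{i^*}1];
\]
commit $\tau := \rho_{i^*}0^{d}$, which has length $m$, and return the other pieces --- of lengths $\ell_{i^*}+1, \ell_{i^*}+2, \dots, m$ --- to the available region. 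The delicate point, which is really the crux of the whole proof, is that this creates no length collision: the surviving old lengths $\ell_j$ with $j < i^*$ are all $\le \ell_{i^*} < \ell_{i^*}+1$, while those with $j > i^*$ satisfy $\ell_j \ge \ell_{i^*+1} > m$ (by maximality of $i^*$ together with $m \notin \{\ell_1,\dots,\ell_k\}$), so the freshly created lengths $\{\ell_{i^*}+1,\dots,m\}$ are disjoint from every surviving old one. Finally, the remaining case $m < \ell_1$ cannot occur at all: a finite family of cylinders with pairwise distinct lengths, all at least $\ell_1$, has measure strictly less than $\sum_{i \ge \ell_1} 2^{-i} = 2^{-(\ell_1-1)} \le 2^{-m}$, contradicting $2^{-m} \le \mu(V)$. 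This establishes the lemma, and with it the theorem.
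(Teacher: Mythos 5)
Your proof is correct. The paper states this Machine Existence (Kraft--Chaitin) Theorem without proof, simply citing Nies, and your argument is essentially the standard one from that source: an online allocation that maintains the unassigned region as a finite disjoint union of cylinders of pairwise distinct lengths, with the allocation lemma's case analysis (exact length match, splitting the longest available cylinder shorter than $m$, and ruling out $m<\ell_1$ by the measure bound) all handled correctly.
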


The coding theorem relates the algorithmic probability of a string to
its prefix-free Kolmogorov Complexity. We state it here in the form
applicable to pairs of strings, but an analogous result holds for
strings.

\begin{theorem}{(Coding Theorem)\cite{Nies09}}
Let $\tau$ be a finite string. Let $\PP$ be a prefix-free encoding of
partial-computable functions outputting pairs of strings. Denote
$\PP_\tau \subseteq \PP$ as the set of prefix-free codes which output
pairs $(\sigma,\tau)$ for some arbitrary string $\sigma$. Then there
is a constant $c$ such that 
\begin{align*}
2^{c-K(\sigma,\tau)} > \sum_{\rho \in \PP \text{ outputting }
  (\sigma,\tau)} 2^{-|\rho|}
\end{align*}
\end{theorem}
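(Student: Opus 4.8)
The plan is to derive the Coding Theorem from the Machine Existence Theorem by ``inverting'' the algorithmic probability. Write
\[
\QQ(\sigma,\tau) \;=\; \sum_{\rho \in \PP \text{ outputting } (\sigma,\tau)} 2^{-|\rho|},
\]
so that the displayed inequality is exactly the claim that $K(\sigma,\tau) \le -\log \QQ(\sigma,\tau) + c$ for some constant $c$. Two facts about $\QQ$ are immediate. First, $\sum_{(\sigma,\tau)} \QQ(\sigma,\tau) \le 1$: each $\rho \in \PP$ contributes its weight $2^{-|\rho|}$ to at most one pair, and $\sum_{\rho\in\PP} 2^{-|\rho|}\le 1$ by the Kraft inequality since $\PP$ is prefix-free. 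Second, $\QQ$ is lower semicomputable: dovetailing the computations of all $\rho\in\PP$ produces rational underapproximations $\QQ_s(\sigma,\tau)$ increasing to $\QQ(\sigma,\tau)$ (here we use a fixed computable pairing so that $(\sigma,\tau)$, and hence $K(\sigma,\tau)$, is treated as a single object).

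Next I would build a bounded request set $L$ from these approximations. For each pair $(\sigma,\tau)$, monitor $\QQ_s(\sigma,\tau)$ as $s$ grows, and each time it rises above a fresh threshold of the form $2^{-k}$, enumerate the request $\bigl((\sigma,\tau),\,k+1\bigr)$ into $L$. The thresholds crossed for a fixed pair are the $2^{-k}$ with $2^{-k}<\QQ(\sigma,\tau)$, i.e.\ $k\ge k_0$ for the least such integer $k_0$, which satisfies $2^{-k_0}\le \QQ(\sigma,\tau)$ and $k_0 \le -\log\QQ(\sigma,\tau)+1$. Hence the weight contributed by this pair is $\sum_{k\ge k_0}2^{-(k+1)} = 2^{-k_0}\le\QQ(\sigma,\tau)$, and summing over pairs gives $\sum_{((\sigma,\tau),n)\in L}2^{-n}\le \sum_{(\sigma,\tau)}\QQ(\sigma,\tau) \le 1$; moreover $L$ is c.e.\ because $\QQ$ is lower semicomputable. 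So $L$ is a bounded request set.

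Finally I would feed $L$ to the Machine Existence Theorem, obtaining a prefix-free machine that, for the least $n$ with $((\sigma,\tau),n)\in L$, allocates a code of length $n = k_0+1 \le -\log\QQ(\sigma,\tau)+2$ for $(\sigma,\tau)$. Simulating this machine inside the fixed canonical prefix-free code $\PP$ costs only a further additive constant, so $K(\sigma,\tau)\le -\log\QQ(\sigma,\tau)+c$. Exponentiating yields $2^{c-K(\sigma,\tau)}\ge \QQ(\sigma,\tau)$, and replacing $c$ by $c+1$ upgrades this to the strict inequality $2^{c-K(\sigma,\tau)} > \QQ(\sigma,\tau)$ in the statement. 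The identical argument, with singletons in place of pairs, gives the version for strings.

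The only genuinely delicate point is the bookkeeping in the middle step: one must verify that the requests are enumerable uniformly in $(\sigma,\tau)$ and, above all, that the threshold-crossing scheme keeps the total requested weight at most $1$ — the factor-of-two geometric-series slack obtained by requesting $k+1$ bits rather than $k$ is exactly what makes $L$ a bounded request set. Everything else is a routine composition of the Machine Existence Theorem with the Kraft inequality for $\PP$.
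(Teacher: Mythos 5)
Your proof is correct: it is the standard derivation of the Coding Theorem via the lower semicomputable semimeasure $\QQ(\sigma,\tau)=\sum_{\rho}2^{-|\rho|}$, the threshold-crossing bounded request set, and the Machine Existence Theorem, with the bookkeeping (the factor-of-two slack from requesting $k+1$ bits, and the final constant bump to get strict inequality) handled properly. The paper itself states this theorem without proof, citing Nies, and your argument is essentially the one given in that reference, so there is nothing to contrast.
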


Using these, we now state and prove the variant of ``Symmetry of
Information'' which we use to establish Lemma
\ref{lem:incompressibility_soi}. 

\begin{lemma}
\label{lem:symmetry_of_information}
Let $\sigma$ be a finite string with $K(\sigma)>|\sigma|-c$, and
$\tau$ be a finite string. Then $|\sigma| + K(\tau|\sigma) \le
K(\sigma,\tau) + O(1)$.
\end{lemma}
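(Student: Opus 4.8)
The plan is to establish the inequality $|\sigma| + K(\tau \mid \sigma) \le K(\sigma,\tau) + O(1)$ by building a bounded request set that efficiently encodes $\tau$ given $\sigma$, exploiting the incompressibility hypothesis $K(\sigma) > |\sigma| - c$. The intuition is that if $\sigma$ is incompressible, then a short program for the pair $(\sigma, \tau)$ cannot be "wasting" bits describing $\sigma$, so it must essentially contain $|\sigma|$ bits of information about $\sigma$ plus whatever is needed to describe $\tau$ given $\sigma$; the leftover part is a valid short description of $\tau$ relative to $\sigma$.

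Concretely, first I would fix a string $\sigma$ satisfying $K(\sigma) > |\sigma| - c$ and, for each candidate length $m$, consider the strings $\tau$ for which $K(\sigma,\tau) \le m$. I would then bound the number of such $\tau$: by the Coding Theorem applied to the pair, $2^{c' - K(\sigma,\tau)} > \sum_{\rho \text{ outputs } (\sigma,\tau)} 2^{-|\rho|}$, and summing over all $\tau$ with $K(\sigma,\tau) \le m$ together with the fact that the prefix-free programs outputting pairs $(\sigma, \cdot)$ obey the Kraft inequality over $\rho$, the total weight $\sum_\tau 2^{-K(\sigma,\tau)}$ is at most $2^{c''} \cdot \sum_{\rho \text{ outputs some } (\sigma,\tau)} 2^{-|\rho|}$. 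Now this last sum is exactly the algorithmic probability that a program outputs a pair whose second... (correcting orientation) whose first coordinate is $\sigma$, which by the Coding Theorem for strings is at most $2^{c''' - K(\sigma)} < 2^{c'''' - |\sigma| + c}$. Hence $\sum_{\tau : K(\sigma,\tau) \le m} 2^{-K(\sigma,\tau)} \le 2^{O(1)} \cdot 2^{-|\sigma|}$, so the number of $\tau$ with $K(\sigma,\tau) \le m$ is at most $2^{m - |\sigma| + O(1)}$.

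Next I would turn this counting bound into a description. The set $\{(\tau, m - |\sigma| + O(1)) : K(\sigma,\tau) \le m\}$, enumerated uniformly in $\sigma$ and $m$ (both available once we fix $\sigma$ and the threshold), is a bounded request set relative to $\sigma$: for each fixed threshold the requests sum to at most $1$ by the counting bound, and one caps over $m$ using the usual geometric-series trick (or rather, one applies this at the single relevant threshold $m = K(\sigma,\tau)$). Applying the Machine Existence Theorem (relativized to $\sigma$) yields a prefix-free machine allocating $\tau$ a code of length $\le m - |\sigma| + O(1) = K(\sigma,\tau) - |\sigma| + O(1)$ on input $\sigma$. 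Therefore $K(\tau \mid \sigma) \le K(\sigma,\tau) - |\sigma| + O(1)$, which rearranges to the claim.

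The main obstacle I anticipate is making the counting step fully rigorous: one must be careful that the sum $\sum_{\rho \text{ outputs some pair } (\sigma, \tau)} 2^{-|\rho|}$ — which by prefix-freeness is at most $1$ and, by the Coding Theorem for strings applied to $\sigma$, is tightly controlled by $2^{-K(\sigma)}$ — is the right quantity to substitute, and that the $O(1)$ constants (from the two invocations of the Coding Theorem, from the Machine Existence Theorem, and from the hypothesis constant $c$) are genuinely absolute and do not secretly depend on $\sigma$ or $\tau$. A secondary subtlety is the uniform enumerability of the request set: we need that $\{(\sigma, \tau) : K(\sigma,\tau) \le m\}$ is computably enumerable (it is, by dovetailing all prefix-free programs), and that the construction can be carried out with $\sigma$ given as input so that the resulting bound is on $K(\tau \mid \sigma)$ rather than on unconditional complexity.
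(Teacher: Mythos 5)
Your proposal is correct and follows essentially the same route as the paper's proof: bound the total algorithmic probability of pairs with first coordinate $\sigma$ via the Coding Theorem together with the hypothesis $K(\sigma) > |\sigma| - c$, assemble from this a bounded request set granting $\tau$ a code of length $K(\sigma,\tau) - |\sigma| + O(1)$ given $\sigma$, and invoke the Machine Existence Theorem. The only difference is organizational: the paper issues one request of length $|p_i| - |\sigma| + c'$ for \emph{every} program $p_i$ outputting a pair $(\sigma,\cdot)$, which sidesteps your counting/threshold detour and the attendant risk of losing a logarithmic term if the geometric-series capping over $m$ were taken literally rather than applied only at the minimal threshold, as in your parenthetical.
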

\begin{proof}
Let $p_i$ be an arbitrary program in the computable enumeration of
$\PP$, the set of programs which output string pairs. Consider the
program $R_{p_i}$ which can be generated from $p_i$, defined by the
following algorithm.\\
%\begin{enumerate}
1. Input $\sigma$.\\
2. Let $U(p_i)$ output the string pair $(\alpha,\tau)$.\\
3. If $\alpha$ is equal to $\sigma$, then we output $(\tau, |p_i| -
  |\sigma| + c')$, where $c'$ satisfies the inequality below.\\
Corresponding to the computable enumeration $p_1$, $p_2$, $\dots$ of
$\PP$, we obtain a computable enumeration $R_{p_1}$, $R_{p_2}$,
$\dots$. We now show that this forms a valid enumeration of a
\emph{bounded request set} (see, for example, \cite{Nies09} page 78).

Let $N_{\sigma}$ be the set of indices $i \in \PP$ where $U(p_i)$
outputs a pair of strings of the form $(\sigma,\tau)$ for some
$\tau$. First, we have
\begin{align*}
\sum_{i \in N_{\sigma}} \frac{1}{2^{|p_i|-|\sigma|+c'}} 
< 2^{|\sigma|-c'} \sum_{i \in N_{\sigma}} \frac{1}{2^{|p_i|}} 
< 2^{|\sigma|-c'+c''} \sum_{\tau \in \Sigma^*}
   \frac{1}{2^{K(\sigma,\tau)}} 
= \frac{2^{|\sigma|-c'+c''}}{2^{K(\sigma)}}
< 1,
\end{align*}
where the second inequality follows from the Coding Theorem (see, for
example, Nies \cite{Nies09}, Theorem 2.2.25), and the last inequality
follows from the assumption.

Hence $R_{p_1}$, $R_{p_2}$, $\dots$ is a computable enumeration of a
bounded request set. By the Machine existence theorem for prefix-free
encoding (see for example, \cite{Nies09} Theorem 2.2.17), it follows
that for any request $(\tau,|p_i|-|\sigma|+c')$, there is a prefix-free
encoding of $\tau$ given $\sigma$ which has length
$|p_i|-|\sigma|+c'$. Now, consider a shortest prefix-free code $p_i$
for $(\sigma,\tau)$. We have that $|p_i| = K(\sigma,\tau)$. Hence
$K(\tau~\mid~\sigma) \le K(\sigma,\tau) - |\sigma|+O(1)$.
  \end{proof}
\begin{lemma}
\label{lem:incompressibility_soi}
If $B$ is incompressible and $A$ is $B$-incompressible, then 
$A \uplus B$ is incompressible.
\end{lemma}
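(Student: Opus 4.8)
The plan is to reduce incompressibility of $A \uplus B$ to a lower bound on $K$ of the pair $(A \res n, B \res n)$, and then apply Lemma~\ref{lem:symmetry_of_information}. To conclude that $A \uplus B$ is incompressible it suffices to show $K\big((A \uplus B) \res 2n\big) \ge 2n - O(1)$ for all $n$: truncating a length-$m$ prefix to a shorter one changes $K$ by at most an additive constant, so controlling the even-length prefixes controls all of them. Moreover a single fixed program de-interleaves $(A \uplus B) \res 2n$ into the pair $\langle A \res n, B \res n\rangle$ (and conversely re-interleaves), so $K\big((A \uplus B) \res 2n\big)$ and $K(A \res n, B \res n)$ agree up to $O(1)$. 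Hence it is enough to prove $K(A \res n, B \res n) \ge 2n - O(1)$.

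Next I would feed this into the Symmetry of Information lemma. Because $B$ is incompressible, there is a constant with $K(B \res n) \ge |B \res n| - O(1)$, so Lemma~\ref{lem:symmetry_of_information} applies with $\sigma = B \res n$ and $\tau = A \res n$, yielding $|B \res n| + K(A \res n \mid B \res n) \le K(B \res n, A \res n) + O(1)$, i.e. $K(A \res n, B \res n) \ge n + K(A \res n \mid B \res n) - O(1)$ (the order of the pair costs only $O(1)$). Combined with the previous reduction, the whole lemma comes down to the single inequality $K(A \res n \mid B \res n) \ge n - O(1)$.

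This last step is where I expect the real difficulty. The hypothesis available is that $A$ is $B$-incompressible, i.e. $K^B(A \res n) \ge n - O(1)$, where $K^B$ is defined via oracle access to the infinite sequence $B$; but what the Symmetry of Information lemma produces is conditioning on the finite string $B \res n$. The natural move is to argue that a finite condition is no weaker than the oracle for this particular task: convert a program witnessing $K(A \res n \mid B \res n)$ into a $B$-oracle machine that reconstructs $B \res n$ by querying positions $0, \dots, n-1$ and then simulates, which would give $K^B(A \res n) \le K(A \res n \mid B \res n) + O(1)$ and hence the desired bound. The obstacle is that the oracle machine is not handed $n$, so it does not know how many bits of $B$ to read, and communicating $n$ explicitly costs $\Theta(\log n)$ — fatal here, since we need an additive constant, not an additive logarithm. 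Getting this bookkeeping exactly right, essentially showing that finite-prefix conditioning on $B \res n$ and oracle access to $B$ coincide up to $O(1)$ for computing an $n$-bit string by exploiting that the relevant length $n$ is recoverable from the setting, is the crux. It is also precisely the oracle-access subtlety that the paper later isolates as the reason the van Lambalgen phenomenon fails the way it does in the time-bounded setting.
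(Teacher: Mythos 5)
Your reduction is exactly the paper's: control the even-length prefixes of $A \uplus B$ (the paper works with the odd-length ones), identify $K((A\uplus B)\res 2n)$ with $K(A\res n, B\res n)$ up to $O(1)$ by de-interleaving, and apply Lemma~\ref{lem:symmetry_of_information} with $\sigma = B \res n$ (incompressible by hypothesis) and $\tau = A \res n$, so that the whole lemma comes down to $K(A \res n \mid B \res n) \ge n - O(1)$. The paper's choice to condition on $B \res (n-1)$ and bound $K((A\uplus B)\res(2n-1))$ is a cosmetic difference.

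The step you leave open is precisely the step the paper disposes of in a single unjustified sentence (``This implies that $K(A \res n \mid B \res (n-1)) \ge n - c'$''), so you have correctly located the one delicate point, and your diagnosis of the obstacle is accurate: converting a program conditioned on the finite string $B\res n$ into a $B$-oracle program costs about $K(n)$ bits to tell the oracle machine how much of $B$ to read, and an additive $\Theta(\log n)$ loss is fatal when the target is an additive constant. What actually closes the gap is not bookkeeping but a strictly stronger form of the hypothesis: for $A$ Martin-L\"of random relative to $B$ one has not merely $K^B(A\res n) \ge n - O(1)$ but $K^B(A \res n) \ge n + K^B(n) - O(1)$ (the relativized Ample Excess / Miller--Yu bound, which the paper cites elsewhere). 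From this, $K^B(A\res n \mid n) \ge K^B(A\res n) - K^B(n) - O(1) \ge n - O(1)$, and since an oracle machine given $n$ can reconstruct $B\res n$ and simulate any program conditioned on it, $K(A\res n \mid B\res n) \ge K^B(A \res n \mid n) - O(1) \ge n - O(1)$: the extra $K^B(n)$ in the lower bound exactly pays for communicating $n$. So your proposal is the paper's proof with its hidden lemma made explicit but unproved; it needs this one additional ingredient to be complete.
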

\begin{proof}
Suppose that for every $n$, $K(B \res n) \ge n-c$ and $K^B(A \res n)
\ge n - c'$. This implies that $K(A \res n \mid B \res (n-1)) \ge
n-c'$. By the version of the Symmetry of information in Lemma
\ref{lem:symmetry_of_information}, we have
$$ (2n-1)-c'\quad<\quad (n-1)+K^B(A \res n)\quad\le\quad K((A \uplus
B)\res (2n-1))+O(1).$$   
A similar argument works for $K((A \uplus B)\res 2n)$. This completes
the proof. 
  \end{proof}
The above proof relied on symmetry of information of prefix-free
Kolmogorov Complexity. Since reasonable complexity-theoretic
hypotheses imply that this fails in resource-bounded settings, we can
foresee that this direction fails in resource-bounded settings, as we
show in section \ref{secn:rbi}.

Since the first direction was a consequence of Symmetry of
Information, it is reasonable to expect the converse direction to
follow from the subadditivity of $K$: $ K((A \uplus B)\res
2n)\quad\le\quad K(B \res n) + K^B(A \res n)+O(1)$. However, this
runs into the following obstacle.  If the prefix of $B$ is
compressible with complexity, say $n-\log(K(n))$, and the prefix of
$A$ is $B$-incompressible with conditional complexity $n+ K(n)$, then
we cannot conclude from subadditivity that
$K((A~\uplus~B)~\res~2n)$ is less than $2n$.  Thus concatenating the
shortest prefix-codes for $B \res n$ and $A \res n$ given $B \res n$
to obtain a prefix-free code for $(A \uplus B) \res 2n$ may be
insufficient for our purpose. We now show the converse direction
through more succinct prefix-free codes.
\begin{lemma}
\label{lem:combine_K_unbounded}
If $B$ is compressible or $A$ is $B$-compressible, then $A \uplus B$
is compressible.
\end{lemma}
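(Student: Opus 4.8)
The plan is to treat the two hypotheses separately and, in each case, build an explicit prefix‑free machine that compresses a suitable prefix of $A\uplus B$, being careful to avoid the $O(\log n)$‑type overheads that sink the naive subadditive bound. \textbf{Case 1: $B$ is compressible.} Fix any $c$ and pick $n$ with $K(B\res n)<n-c$; let $q$ be a shortest prefix‑free program for $B\res n$. Consider the fixed prefix‑free machine which, on an input of the form $q\,w$, decodes $B\res n$ from the self‑delimiting prefix $q$ (thereby also learning $n=|B\res n|$), then reads exactly the next $n$ bits of $w$ as $A\res n$, and outputs $(A\uplus B)\res 2n$. The point is that once $n$ is known no self‑delimiting encoding of $A\res n$ is needed: $n$ raw bits suffice. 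Hence $K((A\uplus B)\res 2n)\le |q|+n+O(1)<2n-c+O(1)$, and since $c$ was arbitrary, $A\uplus B$ is compressible.

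\textbf{Case 2: $A$ is $B$-compressible.} This is the harder case, because a shortest program $p$ witnessing $K^B(A\res n)<n-c$ may query $B$ at positions far beyond $n$, so we cannot just substitute $B\res n$ for the oracle. The idea is to simulate $p$ while feeding it the needed oracle bits \emph{as part of the input}, and then to output a \emph{longer} prefix of $A\uplus B$ that already contains all the queried bits, so that the cost of those bits is paid for by the output itself. Concretely, fix $c$, pick $n$ with $K^B(A\res n)<n-c$, and let $p$ be a witnessing prefix‑free oracle program, $|p|<n-c$. Build the fixed prefix‑free machine which, on input $p\,w$, decodes $p$, then simulates $p$: whenever $p$ issues an oracle query to a position not seen before it reads the next bit of $w$ as the answer (reusing stored answers for repeated queries). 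When $p$ halts it has produced $A\res n$ (so the machine learns $n$) and a finite set $Q$ of queried positions with their bits; the machine sets $m=\max(n,\,1+\max Q)$, reads $|[0,m)\setminus Q|$ further bits of $w$ to complete $B\res m$, then $m-n$ further bits of $w$ as $A[n\ldots m-1]$, and outputs $(A\uplus B)\res 2m$.

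The set of valid inputs $p\,w$ is prefix‑free: self‑delimiting codes for $p$ are prefix‑free; for a fixed $p$ the oracle‑answer strings leading to a halt form a prefix‑free set; and after that the number of further bits read is determined by the run — so no two valid inputs are comparable. The machine is fixed (independent of $A,B,n,c$), and feeding it the true values of $B$ at the queried/missing positions and of $A$ on $[n,m)$ makes it output $(A\uplus B)\res 2m$. The input length is $|p|+|Q|+(m-|Q|)+(m-n)=|p|+2m-n<2m-c$, so $K((A\uplus B)\res 2m)\le |p|+2m-n+O(1)<2m-c+O(1)$; note that $m$, however astronomically large, never appears as an additive overhead, because the decoder recomputes it. As $c$ is arbitrary, $A\uplus B$ is compressible, which completes this case.

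I expect the main obstacle to be exactly the oracle‑reach issue in Case 2: ensuring that describing the (possibly very numerous, possibly very distant) oracle bits costs no more than it saves. The trick that makes it go through is to charge those bits to a correspondingly longer output prefix $(A\uplus B)\res 2m$ and to have the decoder reconstruct the cutoff $m$ itself rather than encoding it. Case 1 is routine once one replaces the shortest self‑delimiting code for $A\res n$ by $n$ raw bits — this is the "more succinct prefix‑free code" alluded to before the statement.
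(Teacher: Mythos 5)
Your proof is correct and follows essentially the same route as the paper: in Case 1 you append the $n$ raw bits of $A\res n$ to a shortest prefix-free code for $B\res n$, and in Case 2 you charge the oracle bits to a longer output prefix $(A\uplus B)\res 2m$, obtaining a code of length $|p|+2m-n<2m-c$. The only cosmetic difference is that the paper assumes without loss of generality that the queried oracle bits form an initial segment $B\res m$ (via redundant queries) and uses the code $\alpha\,(B\res m)\,(A[n\dots m-1])$, whereas you simulate the query set $Q$ explicitly and let the decoder reconstruct $m$; the counting is identical.
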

\begin{proof}
Let $K(B \res n) < n-c$, and let $\sigma$ be a shortest program from
the c.e. set of codes $\PP$ which outputs $B \res n$. Consider the
prefix-free set defined by
\begin{align}
\mathcal{Q}_n = \{\tau \rho\ \mid\ \tau \in \PP, |\rho|=n \}.
\end{align}
This is a prefix-free c.e. set of codes. Then $\sigma 
(A~\res~n)$ - \emph{i.e.} $\sigma$ concatenated with the first $n$ 
bits of $A$ - is a code for $A \uplus B$ for some machine $M$ which
first runs $R(\sigma)$ to output $B \res n$, then interleaves $A \res
n$ with $B \res n$ to produce $(A \uplus B) \res n$. The length of
this code is at most $K(B \res n) + n + O(1)$, showing that $A \uplus
B$ is compressible at length $2n$.

Now, assume that $A$ is $B$-compressible, and let $n$ and $m$ satisfy
$$K(A \res n \mid B \res m) \le n - c.$$ 
Since we can make redundant queries, without loss of generality, we
assume that $m\ge n$. Let $\PP$ be the set of prefix-free encodings of
one-argument partial computable functions. We construct a prefix-free
code to show that $(A \uplus B)$ is compressible at length
$2m$. Consider $\QQ_{m,n}$ defined by
\begin{align}
\label{prefix_set_mn}
\QQ_{m,n} = \{ \tau  \sigma \mmid \tau \in \PP,\ |\sigma| =
2m-n\}.
\end{align}
Since $\PP$ is a prefix-free set and we append strings of a fixed
length to the prefix-free codes, $\QQ_{m,n}$ is also a prefix-free
set. If $\PP$ is computably enumerable, then so is $\QQ_{m,n}$. 
Moreover, there is an encoding of $A_0 B_0 \dots A_{m-1} B_{m-1}$ in 
$\QQ_{m,n}$ given by $\alpha  (B\res m)  (A[n \dots m-1])$. This
encoding has length at most $n - c + m + m - n$, which is at most 
$2m-c$. 
  \end{proof}

We may expect this proof to be easily adapted to resource-bounded
settings. Inherent in the above proof is the concept of universality
-- since there is a universal self-delimiting Turing machine which
incurs at most additive loss over any other prefix-free encoding, it
suffices to show that there is some prefix-free succinct encoding. We
appropriately modify this in resource-bounded settings which lack such
universal machines in general. 

\section{Martingales and van Lambalgen's Theorem}
We now approach van Lambalgen's theorem using martingales,
adapting the Solovay tests in the literature \cite{Nies09},
\cite{HirschfeldtBook}.
\begin{definition}
A function $d: \Sigma^* \to \Rplus$ is said to be a \emph{martingale}
if $d(\lambda) = 1$ and for every string $w$, $d(w) =
(d(w0)+d(w1))/2$, and a \emph{supermartingale} if for every string
$w$, $d(w) \ge (d(w0)+d(w1))/2$.

A \emph{martingale} or a \emph{supermartingale} is said to be
\emph{computably enumerable} (c.e.) if there is a Turing Machine $M :
\Sigma^* \times \N \to \Q$ such that for every string $w$, the
sequence $M(w,n)$ monotonically converges to $d(w)$ from below.
\end{definition}
The rate of convergence in the above definition need not be
computable.
\begin{definition}
We say that a martingale $d$ \emph{succeeds} on $X \in
\Sigma^{\infty}$ if $\limsup_{n \to \infty} \\d(X~\res~n) = \infty$,
written $X \in S^\infty[d]$, and that $d$ \emph{strongly succeeds} on
$X$, written $X \in S^\infty_\text{str}[d]$, if $\liminf_n d(X~\res~n)
= \infty$.

If no computably enumerable martingale or supermartingale succeeds on
$X$, then we say that $X$ is Martin-L\"of random. We say that $X$ is
\emph{non-Martin-L\"of random} relative to $Y$ if there is a
computably enumerable oracle martingale $d$ such that $\limsup_{n \to
  \infty} d^Y(X \res n) = \infty$.
\end{definition}
\begin{lemma}
If $B$ is not Martin-L\"of random or $A$ is not Martin-L\"of random
relative to $B$, then $A \uplus B$ is not Martin-L\"of random.
\end{lemma}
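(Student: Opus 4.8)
The plan is to treat the two disjuncts separately, in each case exhibiting a computably enumerable supermartingale that succeeds on $A \uplus B$, which by the definition above means $A \uplus B$ is not Martin-L\"of random.

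\smallskip
\noindent\emph{Case 1: $B$ is not Martin-L\"of random.} Fix a c.e.\ (super)martingale $d$ with $\limsup_n d(B \res n) = \infty$. Recall that in $A \uplus B = A_0 B_0 A_1 B_1 \cdots$ the even positions carry the bits of $A$ and the odd positions carry the bits of $B$. The idea is to gamble as $d$ does on the $B$-bits while staking nothing on the $A$-bits: for a finite string $w$, let $w_B$ be the subword of $w$ sitting in the odd coordinates and put $e(w) := d(w_B)$. Checking the (super)martingale inequality by cases on the parity of $|w|$ is routine -- appending an $A$-bit leaves $w_B$ unchanged so $e$ keeps its capital, and appending a $B$-bit makes $e$ inherit the inequality of $d$. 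Since $w \mapsto w_B$ is computable, $e$ is c.e.; and $e((A \uplus B)\res 2n) = d(B \res n)$, so $e$ succeeds on $A \uplus B$.

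\smallskip
\noindent\emph{Case 2: $A$ is not Martin-L\"of random relative to $B$.} Fix a c.e.\ oracle martingale $d$ with $\limsup_n d^B(A \res n) = \infty$. The naive idea -- run $d$ on the $A$-bits of $A \uplus B$, answering $d$'s oracle queries from the $B$-bits of $A \uplus B$ -- hits a timing obstacle: to stake its bet on $A_n$ (position $2n$) our gambler has read only $B \res n$, whereas $d^B(A \res n)$ may need bits of $B$ past the $n$-th, which appear in $A \uplus B$ only later. I would therefore not try to preserve the bit-by-bit betting structure but pass through a Solovay-style family. For $k \in \N$ let $V_k$ be the set of all $Z \in \Sigma^\infty$ some finite prefix of which already forces $\sup_m d^{(Z)_{\mathrm{odd}}}((Z)_{\mathrm{even}} \res m) > 2^k$ using as oracle answers only those bits of $(Z)_{\mathrm{odd}}$ that the prefix reveals, where $(Z)_{\mathrm{even}}$ and $(Z)_{\mathrm{odd}}$ denote the subsequences of $Z$ in the even and odd coordinates. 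Then $\{V_k\}$ is uniformly c.e.\ and open, and averaging Ville's inequality for the martingale $\sigma \mapsto d^Y(\sigma)$ over $Y$ -- legitimate because interleaving is measure-preserving, so the even and odd coordinates of a uniformly random $Z$ are independent and uniform -- gives $\mu(V_k) \le 2^{-k}$. Moreover $A \uplus B \in \bigcap_k V_k$, precisely because $\limsup_n d^B(A \res n) = \infty$. Finally, turning each $V_k$ into the c.e.\ martingale $d_{V_k}(\sigma) := 2^{|\sigma|}\,\mu(V_k \cap [\sigma])$ (where $[\sigma]$ is the cylinder of extensions of $\sigma$) and forming $e := \sum_k d_{V_k}$, which has finite initial capital since $\sum_k 2^{-k} = 1$, yields -- after normalizing the initial capital -- a c.e.\ supermartingale that succeeds (in fact strongly) on every member of $\bigcap_k V_k$, hence on $A \uplus B$.

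\smallskip
The main obstacle is exactly the timing issue in Case 2: a gambler on $A \uplus B$ has only $B \res n$ in hand when it must commit on $A_n$, so the oracle martingale witnessing non-randomness of $A$ relative to $B$ cannot be transplanted verbatim. Classically this costs nothing, because enumerating the bad set $V_k$ lets us wait for the relevant bits of $B$ to surface for free; it is precisely this ``free waiting'' that disappears under a time bound, which is why this direction must be revisited in the resource-bounded setting and is ultimately recovered there only with the lookahead-martingale model.
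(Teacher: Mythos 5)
Your proof is correct, and your Case 1 (the martingale that ignores the $A$-positions and mimics $d$ on the odd positions) is exactly the paper's construction $d_{AB}$ in~(\ref{eqn:martingale_B_AB}). For Case 2, however, you take a genuinely different route. The paper stays inside the martingale formalism: it defines a c.e.\ martingale $m$ on the interleaved string by running $d$ on the $A$-bits with the so-far-revealed $B$-bits as oracle, and uses the stage mechanism $m(\cdot)[s]$ of computably enumerable martingales to let the value at a fixed prefix be revised upward ``when longer initial segments of the oracle become available.'' You instead reject this direct transplant because of the timing obstacle (the gambler holds only $B\res n$ when committing on $A_n$, while $d^B(A\res n)$ may consult later bits of $B$) and pass through a uniformly c.e.\ open family $V_k$ with the Fubini/Ville bound $\mu(V_k)\le 2^{-k}$, converting back to a c.e.\ (super)martingale $\sum_k 2^{|\sigma|}\mu(V_k\cap[\sigma])$ at the end. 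This is essentially the textbook test-based proof of this direction of van Lambalgen, and it is arguably the more watertight of the two: a martingale's value on a finite string cannot literally depend on oracle bits not contained in that string, so the paper's stage-based definition tacitly relies on the ``lookahead'' mechanism it only formalizes in its final section, whereas your $V_k$ argument needs no such device. What the paper's route buys in exchange is a construction that stays bit-by-bit and thereby isolates precisely the feature --- the free waiting for oracle bits, which you also identify --- whose loss causes this direction to fail for $t$-bounded martingales and to be recovered only in the lookahead model. Your only loose end is the normalization of the initial capital $\sum_k\mu(V_k)$, which is merely lower semicomputable; since success of a c.e.\ supermartingale is invariant under positive scaling (and the paper does not require supermartingales to have $d(\lambda)=1$), this is harmless.
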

\begin{proof}
Let $d_B$ be a martingale that succeeds on $B$. Then the martingale
$d_{AB}$ defined by setting $d_{AB}(\lambda)$ to 1 and
\begin{align}
\label{eqn:martingale_B_AB}
d_{AB}(\sigma_0 \tau_0 \dots \tau_{n-2} \sigma_{n-1}) &= 
 d_{AB}(\sigma_0 \tau_0 \dots \tau_{n-2}).\\
\notag
d_{AB}(\sigma_0 \tau_0 \dots \sigma_{n-1} \tau_{n-1}) &= 
d_{B}(\tau_0 \dots \tau_{n-1}).
\end{align}
The above definition is a martingale since for any $n \ge 2$,
$$d_{AB}(\alpha_0 \beta_0 \dots \beta_{n-2} \alpha_{n-1})=
d_{B}(\beta_0 \dots \beta_{n-2}).$$ 
Clearly, $\limsup_{n \to \infty} d_{AB}(A \uplus B) =
\limsup_{n\to\infty} d_{B}(B)$ and hence $d_{AB}$ succeeds on $A
\uplus B$. 

Now, suppose $d$ succeeds on $A$ given oracle access to $B$. Consider
martingale $m$ defined by setting $m(\lambda)$ to $1$ and setting
\begin{align*}
  \ m(\sigma_0 \tau_0 \dots \sigma_{n-1})[s]
  &= d^{\tau \res s}(\sigma \res i)\\
  m(\sigma_0 \tau_0 \dots \tau_{n-1})[s] &= 
  m(\sigma_0 \tau_o \dots \sigma_{n-1})[s], 
\end{align*}
where the notation $m(\alpha)[s]$ denotes the value that the
computation $m$ assigns to $\alpha$ at stage $s$ and for any string
$x \in \Sigma^*$, the value of $m(x)=\limsup_{s \to
  \infty}m(x)[s]$. Note that in the computation of $d$ in the second
step, each fixed initial segment of $v$ can query longer initial
segments of $w$ when they become available.

Since $d$ is a c.e. oracle martingale, it follows that $m$ is a
c.e. martingale. For every pair of infinite sequences $V$ and $W$ and
for every $l$, there is a number $n$ computable from $V \res l$ and
$W$ such that for all large enough stages $s$,
$d^{W \res s}(V \res l) = d^W(V \res l)$. Thus for each $l$, the value
of $m((V \res l) \uplus (W \res l))[s]$ is the same as
$m((V \res l) \uplus (W \res l))[s_1]$ for all $s_1 > s$, for some
large enough $s$. It follows that $m$ is c.e. martingale. Since $d^B$
succeeds on $A$, $m$ succeeds on $A\uplus B$ and this completes the
proof.   \end{proof}

The converse also holds. However, in the latter part of this paper we
show that the analogous results may not hold in time-bounded versions.
\begin{lemma}
\label{lem:martingaleconverse}
If $A \uplus B$ is not Martin-L\"of random, then either $B$ is not
Martin-L\"of random or $A$ is not Martin-L\"of random relative to
$B$. 
\end{lemma}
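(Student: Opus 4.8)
I would prove the contrapositive, asymmetrically: assume $B$ is Martin-L\"{o}f random (otherwise there is nothing to prove) and that $A \uplus B$ is not Martin-L\"{o}f random, and produce a c.e.\ oracle martingale that succeeds on $A$ with oracle $B$. The key move is to step out of the martingale world for the middle of the argument. By the standard equivalences (Schnorr's theorem and the Solovay/Martin-L\"{o}f-test characterizations, all of which relativize), a c.e.\ martingale succeeds on $A \uplus B$ exactly when $A \uplus B$ lies in $\bigcap_n U_n$ for some uniformly c.e.\ sequence of open sets $U_n \subseteq \Sigma^{\infty}$ with $\mu(U_n) \le 2^{-n}$, where I view $\Sigma^{\infty}$ as the space of interleaved sequences. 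I prefer to work with such a test because open sets, unlike martingales, have well-behaved sections: the naive attempt --- restrict the martingale $d$ succeeding on $A \uplus B$ to the $B$-section and renormalize --- fails to give a (super)martingale, since $d$ also bets on the bits of $B$ and those bets can double the capital at each step.

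First I would define, for a candidate sequence $Y$ for $B$ and each $n$, the section $U_n^Y = \{ X \in \Sigma^{\infty} : X \uplus Y \in U_n \}$, and verify the bookkeeping: $U_n^Y$ is open and c.e.\ uniformly in $n$ using $Y$ as an oracle, and $Y \mapsto \mu(U_n^Y)$ is lower semicomputable, so $\{ Y : \mu(U_n^Y) > q \}$ is c.e.\ open, uniformly in $n$ and the rational $q$. Fubini's theorem gives $\int \mu(U_n^Y)\, d\mu(Y) = \mu(U_n) \le 2^{-n}$, so Markov's inequality yields $\mu(\{ Y : \mu(U_{2k}^Y) > 2^{-k} \}) \le 2^{-k}$. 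Packaging the tails of these bad events, set
\[
R_m \;=\; \bigcup_{k \ge m}\, \{\, Y : \mu(U_{2k}^Y) > 2^{-k} \,\};
\]
then $\{R_m\}$ is a uniformly c.e.\ sequence of open sets with $\mu(R_m) \le 2^{-m+1}$, i.e.\ (up to a harmless shift of the index) a Martin-L\"{o}f test.

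Now I would spend the randomness of $B$. Since $B$ is Martin-L\"{o}f random, $B \notin \bigcap_m R_m$, so $B \notin R_{m_0}$ for some $m_0$, and by construction of $R_{m_0}$ this means $\mu(U_{2k}^B) \le 2^{-k}$ for \emph{every} $k \ge m_0$. Thus $\{ U_{2k}^B \}_{k \ge m_0}$ is a Martin-L\"{o}f test relative to $B$; and since $A \uplus B \in \bigcap_n U_n \subseteq \bigcap_k U_{2k}$, we have $A \in \bigcap_{k \ge m_0} U_{2k}^B$. Hence $A$ is not Martin-L\"{o}f random relative to $B$, and unwinding this $B$-test into a succeeding c.e.\ oracle martingale on $A$ (again by the relativized Schnorr theorem, e.g.\ a Solovay-style weighted sum of the martingales realizing the $U_{2k}^B$) finishes the argument.

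The step I expect to be the main obstacle is precisely the transition from ``$\mu(U_n^Y)$ is small for $\mu$-almost every $Y$'', which is immediate from Fubini, to the same statement for \emph{our particular} random $B$: this is what forces us to gather all of the Fubini violations into a single Martin-L\"{o}f test $\{R_m\}$ and run it against $B$. Everything else is uniformity and c.e.\ bookkeeping (sections are uniformly c.e., the section-measure is lower semicomputable, and the test/martingale translations preserve c.e.-ness and relative oracle access). One may alternatively observe that the statement follows from the incompressibility characterization of Martin-L\"{o}f randomness together with the contrapositive of Lemma~\ref{lem:incompressibility_soi}; I prefer the test route above because it parallels the rest of this section, and because the place where it must break under a resource bound --- the unrestricted enumeration over $\Sigma^{\infty}$ hidden inside $R_m$ --- is what anticipates the counterexamples of the later sections.
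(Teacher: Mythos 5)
Your proof is correct, and it is the standard textbook route to the hard direction of van Lambalgen's theorem (as in Downey--Hirschfeldt or Nies): pass from the martingale to a Martin-L\"of test $\{U_n\}$ covering $A \uplus B$, take $B$-sections, apply Fubini and Markov to bound $\mu(U_{2k}^Y)$ for most $Y$, collect the exceptional oracles into a single test $\{R_m\}$, and spend the randomness of $B$ to conclude that $\{U_{2k}^B\}_{k \ge m_0}$ is a $B$-relative test capturing $A$. The paper takes a genuinely different route that never leaves the martingale/integral-test world: it averages $d_{AB}$ over the $A$-coordinate to get $d_B(\sigma)=2^{-|\sigma|}\sum_{|\tau|=|\sigma|}d_{AB}(\tau\uplus\sigma)$ and case-splits directly on whether these averages along $B$ are unbounded. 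If they are, $d_B$ succeeds on $B$ outright; if they are bounded by some $N$, the paper converts $d_{AB}$ via the savings-account trick into a strongly succeeding capital function $s$ and observes that $g^B(X)=\lim_n s(X_0B_0\dots X_{n-1}B_{n-1})$ is then a measure of impossibility relative to $B$ (its integral is bounded precisely by the case hypothesis), with $g^B(A)=\infty$. This sidesteps your Fubini--Markov--$R_m$ machinery entirely: rather than showing the sections are small for almost every oracle and then arguing that the random $B$ is not exceptional, the paper's case analysis makes $B$ itself either a point of success for $d_B$ or a point where the sectional integral is finite. Your version buys uniformity and a cleaner conceptual picture (and your diagnosis of why naively restricting $d$ to the $B$-section fails --- it also bets on $B$'s bits --- is exactly the obstruction the paper's integral-test detour is designed to avoid); the paper's version buys a shorter path that stays closer to the martingale formalism it wants to perturb in the resource-bounded sections. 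Both arguments are sound.
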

For the proof of this lemma, it is convenient to use a notion which is
related to martingales.

\begin{definition}
A function $f: \Sigma^\infty \to [0,\infty]$ is called \emph{lower
  semicomputable} if the set 
$$\{(\sigma, q) \mid 
\sigma \prec X, X \in \Sigma^\infty \text{ and }
q \le f(X)\}$$
is computably enumerable -- \emph{i.e.} the rational points in the
lower graph of $f$ is computably enumerable.
\end{definition}

\begin{definition}
A lower semicomputable function $f$ is said to be a \emph{measure of
  impossibility}\footnote{also called an \emph{integral test}} with
respect to a probability measure $P$ if $\int f dP < \infty$.
\end{definition}

We focus our attention on the uniform probability measure on $[0,1]$.
We have the following theorem characterizing Martin-L\"of randomness
in terms of measures of impossibility. 

\begin{theorem}
\label{thm:measureimposs}
A sequence $X \in \Sigma^\infty$ is Martin-L\"of random if and only if
for every measure of impossibility $f: \Sigma^\infty \to [0, \infty]$,
$f(X) < \infty$.
\end{theorem}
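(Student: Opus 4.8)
The plan is to establish the equivalence through the classical correspondence between Martin-L\"of tests and integral tests (measures of impossibility), which is the measure-theoretic counterpart of the Solovay-test viewpoint used elsewhere in this paper. Throughout, write $P$ for the uniform (Lebesgue) measure, identifying $\Sigma^\infty$ with $[0,1]$ in the usual way, and use the test characterization of Martin-L\"of randomness (equivalent to the martingale definition above, at the level of familiarity assumed in Section~2).

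\textbf{Randomness implies finiteness of every measure of impossibility.} Suppose $X$ is Martin-L\"of random and let $f$ be a measure of impossibility, with $\int f\, dP = v < \infty$. Fix a positive rational $c \ge v$; such a rational exists because $v$ is finite, and we will never need to compute it. For each $n \in \N$ put $U_n = \{Y \in \Sigma^\infty : f(Y) > 2^n c\}$. Markov's inequality gives $P(U_n) \le v/(2^n c) \le 2^{-n}$. Since $f$ is lower semicomputable, the family $(U_n)_{n \in \N}$ is a \emph{uniformly} effectively open sequence of classes: an enumeration of $U_n$ is obtained by enumerating the rational certificates $(\sigma, q)$ witnessing that $f$ exceeds $q$ on extensions of $\sigma$, and retaining those with $q \ge 2^n c$. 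Hence $(U_n)_n$ is a Martin-L\"of test, and since $X$ is random it is not in $\bigcap_n U_n$; picking $n$ with $X \notin U_n$ gives $f(X) \le 2^n c < \infty$.

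\textbf{Non-randomness produces a witnessing $f$.} Conversely, assume $X$ is not Martin-L\"of random and fix a Martin-L\"of test $(V_n)_{n \ge 1}$ with $P(V_n) \le 2^{-n}$ and $X \in \bigcap_n V_n$. Define $f = \sum_{n \ge 1} \chi_{V_n}$, so that $f(Y)$ counts the indices $n$ with $Y \in V_n$. Then $\int f\, dP = \sum_{n \ge 1} P(V_n) \le 1 < \infty$, so $f$ is a measure of impossibility, while $f(X) = \infty$ because $X$ lies in every $V_n$. Finally $f$ is lower semicomputable: for rational $q$, $\{Y : f(Y) > q\} = \bigcup \{\, V_{n_1} \cap \dots \cap V_{n_k} : n_1 < \dots < n_k,\ k > q \,\}$, which is a c.e. union of finite intersections of uniformly c.e. open classes, hence c.e. open, uniformly in $q$.

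\textbf{Main obstacle.} The underlying measure theory is light — one Markov inequality and one telescoping sum of test levels — so the real content is the effectivity bookkeeping. The points that need care are: (i) arranging that $(U_n)_n$ is a genuine Martin-L\"of test, i.e.\ a sequence of $\Sigma^0_1$ classes \emph{uniform in} $n$, rather than merely a collection of individually c.e.\ classes; (ii) observing that the finiteness of $\int f\,dP$ is used only to supply a rational bound $c$ and never to compute anything, so no effectivity of the integral is required; and (iii) checking that a countable sum of lower semicomputable indicator functions is again lower semicomputable in the sense of the definition above. Once these are verified, both implications close immediately.
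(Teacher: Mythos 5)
The paper states Theorem~\ref{thm:measureimposs} without proof (it is the classical Levin/integral-test characterization, cited from the literature), so there is no in-paper argument to compare against; your proof is the standard one and it is correct. Both directions are sound: the Markov-inequality construction of the uniform test $(U_n)_n$ from a finite-integral lower semicomputable $f$, with the rational bound $c\ge v$ hardwired non-uniformly (you rightly note it never needs to be computed), and conversely the sum $f=\sum_n \chi_{V_n}$ of test-level indicators, whose superlevel sets you correctly exhibit as c.e.\ unions of finite intersections of uniformly c.e.\ open classes. Two small points worth tidying. First, you define $U_n=\{Y: f(Y)>2^n c\}$ but enumerate certificates with $q\ge 2^n c$; the enumerated class is sandwiched between $\{f>2^nc\}$ and $\{f\ge 2^nc\}$, and since Markov bounds the measure of the larger set by $2^{-n}$ as well, the cleanest fix is simply to take $U_n=\{Y: f(Y)\ge 2^n c\}$ throughout. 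Second, you silently pass between the paper's martingale definition of Martin-L\"of randomness and the test definition; this is Schnorr's equivalence and is within the prerequisites declared in Section~2, but a one-line acknowledgement (as you give) is appropriate. Note also that the paper's formal definition of ``lower semicomputable'' is stated loosely (the quantification over $X$ makes the displayed set depend only on $\sup_{X\succ\sigma}f(X)$); your argument uses the standard intended reading, namely that $f(X)=\sup\{q:(\sigma,q)\in E,\ \sigma\prec X\}$ for a c.e.\ set $E$, which is what is needed for the superlevel sets to be uniformly effectively open.
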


Relativizing the proof of the above theorem, we have the following.

\begin{corollary}
\label{thm:measureimposs_rel}
A sequence $X \in \Sigma^\infty$ is Martin-L\"of random relative to $Y
\in \Sigma^\infty$ if and only if for every measure of impossibility
$f^Y: \Sigma^\infty \to [0, \infty]$, $f^Y(X) < \infty$.
\end{corollary}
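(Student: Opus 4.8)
The plan is to relativize the proof of Theorem~\ref{thm:measureimposs} essentially verbatim, so let me first recall that argument in the unrelativized case. For the direction ``$X$ random $\Rightarrow$ every integral test is finite at $X$'': if $X$ is not Martin-L\"of random, fix a Martin-L\"of test $(V_k)_k$ with $X \in \bigcap_k V_k$ and $P(V_k) \le 2^{-k}$, and set $f = \sum_k \mathbf{1}_{V_k}$. This $f$ is lower semicomputable because the $V_k$ are uniformly c.e.\ open, it satisfies $\int f \, dP = \sum_k P(V_k) \le 1 < \infty$, and $f(X) = \sum_k \mathbf{1}_{V_k}(X) = \infty$, so $f$ witnesses the failure of the right-hand side. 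For the converse: given a measure of impossibility $f$ with $c := \int f \, dP < \infty$ and $f(X) = \infty$, consider the superlevel sets $U_k = \{ Z \in \Sigma^\infty : f(Z) > 2^k \}$. Lower semicomputability of $f$ makes $(U_k)_k$ uniformly effectively open, and Markov's inequality gives $P(U_k) \le 2^{-k} c$, so after replacing $U_k$ by $U_{k + \lceil \log c \rceil}$ we obtain a Martin-L\"of test; since $f(X) = \infty$ we have $X \in U_k$ for all $k$, so $X$ fails this test and is not Martin-L\"of random.

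To obtain Corollary~\ref{thm:measureimposs_rel} I would rerun both halves with $Y$ attached as an oracle throughout: replace ``c.e.'' by ``c.e.\ relative to $Y$'', ``effectively open'' by ``effectively open relative to $Y$'', ``Martin-L\"of test'' by ``Martin-L\"of test relative to $Y$'', and ``lower semicomputable'' by ``lower semicomputable relative to $Y$''. The superlevel sets of an $f^Y$ that is lower semicomputable relative to $Y$ are uniformly effectively open relative to $Y$; Markov's inequality is untouched because $\int f^Y \, dP$ is just a finite real and the integral is taken against the (oracle-free) uniform measure; and the indicator sum $\sum_k \mathbf{1}_{V_k^Y}$ of a $Y$-Martin-L\"of test is lower semicomputable relative to $Y$, since its lower graph is enumerated by dovetailing the $Y$-enumerations of the $V_k^Y$. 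Each step then goes through unchanged, yielding both implications of the relativized characterization.

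I do not expect a genuine obstacle here: all the mathematical content already sits in Theorem~\ref{thm:measureimposs}, and the relativization is mechanical precisely because the oracle $Y$ enters only through the enumeration machinery and never interacts with the quantitative measure bound. The only points I would be careful to spell out are (i) that a single oracle $Y$ suffices to enumerate the lower graph of $\sum_k \mathbf{1}_{V_k^Y}$ uniformly, which holds by dovetailing; and (ii) that the ``measure of impossibility relative to $Y$'' appearing in the statement is understood as a function lower semicomputable relative to $Y$ with finite integral against the uniform measure, so that the class of tests quantified over on the right matches exactly the one used to define $Y$-relative Martin-L\"of randomness on the left.
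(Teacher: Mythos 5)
Your proposal is correct and matches the paper exactly: the paper gives no separate argument for this corollary, stating only that it follows by relativizing the proof of Theorem~\ref{thm:measureimposs}, which is precisely the mechanical oracle-attachment you carry out. Your additional care about the superlevel sets being $Y$-effectively open and the integral bound being oracle-independent is a faithful (and slightly more detailed) rendering of the same argument.
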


\begin{proof}[Proof of Lemma~\ref{lem:martingaleconverse}]
Suppose $d_{AB}$ is a martingale which succeeds on $A \uplus
B$. Define the martingale $d_B$ by setting $d_B(\lambda)=1$ and
\begin{align}
d_B(\sigma)=2^{-|\sigma|} \sum \limits_{\{\tau ~:~
  |\tau|=|\sigma|\}} d_{AB}(\tau \uplus \sigma)
\end{align}
for finite nonempty strings $\sigma$. It is easy to establish that
$d_B$ is c.e. if $d_{AB}$ is.

Suppose that for any positive $N$, there are infinitely many $n$ such
that
\begin{align}
\label{ineq:semi_martingale}
  \sum_{\sigma \in \Sigma^n} 
    d_{AB}(\sigma_0B_0 \dots \sigma_{n-1}B_{n-1})
  \ge N 2^n.
\end{align}
In this case, $d_B(B \res n)$ is at least $N$. Hence $d_B$ succeeds on
$B$, and the lemma holds.

Otherwise, there is some positive $N$ and an $n_0$ such that for all
$n \ge n_0$, we have the inequality in (\ref{ineq:semi_martingale})
reversed.

By using the ``savings account'' trick, we can define another
martingale which succeeds strongly on $A \uplus B$. We consider the
paired functions $(f,s): \Sigma^* \to \Rplus$ defined as
follows. Initially, $f(\lambda)=1$ and $s(\lambda)=1$. On any $\sigma
\in \Sigma^*$, $f(\sigma b)$ bets the same ratio of its capital as
$d_{AB}$, and if the resulting capital is greater than 2, then we set
$f(\sigma b)$ to 1 and transfer the remaining capital to $s(\sigma b)$.
\begin{align*}
f(\lambda) &= 1\\
f(\sigma b) &= \begin{cases}
  \frac{d_{AB}(\sigma b)}{d_{AB}(\sigma)}f(\sigma) &\text{if } 
  \frac{d_{AB}(\sigma b)}{d_{AB}(\sigma)}f(\sigma)<2\\  
  1                      &\text{otherwise,}
\end{cases}
\end{align*}
and
\begin{align*}
s(\lambda) &= 0\\
s(\sigma b) &= \begin{cases}
  s(\sigma) &\text{if }
  \frac{d_{AB}(\sigma b)}{d_{AB}(\sigma)}f(\sigma)<2\\  
  s(\sigma)
  +\left(\frac{d_{AB}(\sigma b)}{d_{AB}(\sigma)}f(\sigma)-1\right) 
  &\text{otherwise.}  
\end{cases}
\end{align*}
We can verify that $s$ is monotone increasing in its prefix length and
if $X \in S^\infty[d_{AB}]$, then $X \in S^\infty_\text{str}[s]$.
  
Now consider the function $g^Y:\Sigma^\infty \to [0,\infty]$ defined
by
$$g^Y(X) = \lim_{n \to \infty} s(X_0 Y_0 \dots X_{n-1}Y_{n-1}).$$ 
For every sufficiently large $n$, we have $\int_{C_\sigma}g^Y d\mu \le
N 2^{n}$, by assumption, where $C_\sigma \subseteq \Sigma^\infty$
consists of all infinite sequences with $\sigma$ as a prefix, and
$\mu$ is the Lebesgue measure on Cantor Space. As in \cite{Nan09}, we
can verify that is a measure of impossibility, using Fatou's lemma.

Hence $g^Y$ is a measure of impossibility. Now by
Theorem~\ref{thm:measureimposs_rel}, if $d_{AB}$ succeeds on $A \uplus
B$, then $g^B(A)=\infty$. Also, if $d_{AB}$ is computably enumerable,
then $g^B$ is lower semicomputable. Hence $A$ is not Martin-L\"of
random relative to $B$.   \end{proof}

\section{Resource-bounded relative randomness and incompressibility} 
\label{secn:rbi}

We consider time-bounded self-delimiting Kolmogorov complexity in this
section. While there are several variants of this notion (see
e.g. \cite{LongpreWatanabe95}, \cite{ABKMR06}), we deal with the
simplest one here. 

The time-bound is a function of the lengths of its
output\footnote{Considering time bound that is dependent on output
  length is not unnatural for decompressors. To make it input-length
  dependent it is customary to append $1^l$ as an additional input
  where $l$ is the output length} as in \cite{LongpreWatanabe95}. Through out this paper we will restrict ourselves only to the time-constructible time-bound function $t$. We
first fix a prefix-free set $\PP$ encoding the set of
partial-computable functions. We do not insist that $\PP$ consist
solely of functions which run in $t$ steps, since the results are
identical with or without this assumption.
\begin{definition}
The \emph{$t$-time-bounded complexity} of $\sigma$ is defined as
defined by 
\begin{align}
K_T(\sigma;t) = \min 
\{ |\pi| \mid  \pi \in \PP \text{ outputs } \sigma
                  \text{ in } \le O(t(|\sigma|)) \text{ steps}\},
\end{align}
and the \emph{conditional $t$-time-bounded complexity} of $\sigma$
given $\tau$ be defined by  
\begin{align}
K_T (\sigma \mid \tau;t) = \min 
\{ |\pi| \mid \pi \in \PP, \pi(\tau) \text{ outputs } \sigma 
                \text{ in } \le O(t(|\sigma|)) \text{ steps}\}.
\end{align}
\end{definition}
For any fixed time bound $t$, we do not have universal machines within
the class of $t$-bounded machines. However, there are invariance
theorems (see e.g. \cite{LiVi08} Chapter 7). Hence we can use the
definition of time-bounded complexity to define the notion of
incompressible infinite sequences.
\begin{definition}
An infinite binary sequence $X$ is said to be
\emph{$t$-incompressible} if $\exists c \quad \forall n \quad K_T(X
\res n;t) \ge n - c$ and \emph{$t^Y$-incompressible} if $\exists c
\quad \forall n \quad K_T(X \res n \mid Y \res m;t) \ge n - c$ for
some $m$ depending on the value of $X$ and $n$.
\end{definition}
If for $t'>t$, a sequence $X$ is $t'$-incompressible, then it is
$t$-incompressible as well. Moreover, for every $X$ and $n$, $K_T(X
\res n) \ge K(X \res n)$. Since the set of $K$-incompressible
sequences has measure $1$, we know that the set of $t$-incompressible
sequences has measure $1$ as well. When the time bound is understood
from the context, we write $K_T(\sigma)$ and $K_T(\sigma \mid \tau)$. 

We now show that for time-bounded Kolmogorov complexity, only one
direction of van Lambalgen's theorem holds. We first show that if we
can compress $B$, or $A$ relative to $B$ within the time bound, then
it is possible to compress $A \uplus B$ within the time bounds,
adapting the proof of Lemma \ref{lem:combine_K_unbounded}.

\begin{lemma}
\label{lem:t_incompressibility_one}
If $B$ is $t$-compressible or $A$ is $t^B$-compressible, then $A
\uplus B$ is $t$-compressible.
\end{lemma}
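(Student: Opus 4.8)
The plan is to follow the structure of the proof of Lemma~\ref{lem:combine_K_unbounded} verbatim, and simply track the running times of the decompressor at each step to confirm they fit within the $O(t(\cdot))$ budget on the length of the output of $A \uplus B$. Since $t$ is superlinear and time-constructible, we have room for bookkeeping such as interleaving two strings of length $n$, which costs only $O(n)$ steps.

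First I would handle the case where $B$ is $t$-compressible. Fix $n$ and $c$ with $K_T(B\res n;t) < n-c$, and let $\sigma\in\PP$ be a witnessing program that outputs $B\res n$ in $O(t(n))$ steps. Consider the prefix-free c.e. set $\QQ_n = \{\tau\rho \mid \tau\in\PP,\ |\rho| = n\}$ exactly as before; the code $\sigma\,(A\res n)$ decodes to $(A\uplus B)\res 2n$ by first running $\sigma$ to recover $B\res n$ and then interleaving with the appended $n$ bits $A\res n$. The running time is $O(t(n)) + O(n)$, and since $t$ is superlinear and the output $(A\uplus B)\res 2n$ has length $2n$, this is $O(t(2n))$. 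The code length is $K_T(B\res n;t) + n + O(1) < 2n - c + O(1)$, so $A\uplus B$ is $t$-compressible.

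Second I would handle the case where $A$ is $t^B$-compressible. By definition there exist $n$ and $m$ (with $m\ge n$, arranging redundant oracle queries if needed) such that $K_T(A\res n \mid B\res m; t) \le n-c$. Let $\alpha \in \PP$ be a program that, on input $B\res m$, outputs $A\res n$ within $O(t(n))$ steps. Form $\QQ_{m,n} = \{\tau\sigma \mid \tau\in\PP,\ |\sigma| = 2m-n\}$, a prefix-free c.e. set, and consider the code $\alpha\,(B\res m)\,(A[n\dots m-1])$: the decompressor peels off the prefix-free part $\alpha$, reads the next $m$ bits as $B\res m$, runs $\alpha$ on $B\res m$ to obtain $A\res n$, appends the remaining $m-n$ bits to get $A\res m$, and interleaves to produce $(A\uplus B)\res 2m$. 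The running time is $O(t(n)) + O(m)$, which is $O(t(2m))$ since $t$ is superlinear and $m \le 2m = |(A\uplus B)\res 2m|$. The code length is at most $(n-c) + m + (m-n) = 2m - c$, establishing $t$-compressibility of $A\uplus B$.

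The only genuine subtlety — the "main obstacle" such as it is — is the absence of a universal $t$-bounded machine, which was used implicitly in the unbounded proof to pass from "there exists a short prefix-free code in some c.e. set $\QQ$" to "$K_T$ is small." Here one must instead invoke the invariance theorem for time-bounded complexity (cf.\ \cite{LiVi08}, Chapter~7): translating between the tailored decoder for $\QQ_n$ (resp.\ $\QQ_{m,n}$) and the fixed reference machine defining $K_T$ costs only a constant additive term in code length and a polynomial (here linear, absorbed by superlinearity of $t$) overhead in time. One should also note that $\alpha$ is allowed to make redundant queries so that $m\ge n$ can be assumed without loss of generality, and that $\PP$ need not consist of $t$-bounded programs since, as remarked in the text, the definition is robust to that choice. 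With these observations the adaptation is routine.
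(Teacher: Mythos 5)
Your proof is correct and follows essentially the same route as the paper's: the same prefix-free code sets $\QQ_n$ and $\QQ_{m,n}$, the same codes $\sigma\,(A\res n)$ and $\alpha\,(B\res m)\,(A[n\dots m-1])$, and the same length bounds, with the time accounting and the appeal to the invariance theorem made slightly more explicit than in the paper. No gaps.
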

\begin{proof}
  Assume that $B$ is $t$-compressible. Then there is a constant $c$
  and infinitely many $n$ such that there is a short program in
  $\beta \in \PP$ with $|\beta| < n-c$ which outputs $B \res n$ within
  $O(t(n))$ steps.

  For any such $n$, consider the prefix-code defined by 
  \begin{align}
    \QQ_n = \{ \sigma \alpha \mid \sigma \in \PP, |\alpha| = n \}.
  \end{align}

  This forms a prefix encoding, containing a code $[\beta(A \res n)]$
  for $(A \uplus B) \res 2n$. Moreover, it is possible to decode
  $(A \uplus B) \res 2n$ from its code within $O(t(2n))$ steps.

  Suppose $A$ is $t^B$-compressible. Assume that $K_T((A \res n)
  \mid (B \res m);t) \le n-c$, witnessed by a code $\alpha$. Without
  loss of generality, we may assume $n \le m$. Then consider
  $\QQ_{m,n}$ as defined in (\ref{prefix_set_mn}). We see that
  $\QQ_{m,n}$ is a computably enumerable prefix set.  The code
  $(\alpha (B \res m)  A[n \dots m-1]) \in \QQ_{m,n}$ of $(A
  \uplus B) \res 2n$ can be decoded in time $O(t(2m))$, and is shorter
  than $2m-c$.

  Hence $K_T((A \uplus B) \res 2m) < 2m-c$. 
  \end{proof}

The converse of the above lemma is false. We do not appeal to the
failure of polynomial-time (in general, resource-bounded) symmetry of
information (see for example, \cite{LongpreWatanabe95}), but directly
construct a counterexample pair.
\begin{lemma}
\label{lem:t_incompressibility_two}
There are sequences $A$ and $B$ where $A \uplus B$ is
$t$-compressible, but $B$ is $t$-incompressible and $A$ is
$t^B$-incompressible.
\end{lemma}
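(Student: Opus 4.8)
The plan is to build the pair $A$ and $B$ so that each of them, viewed in isolation, looks incompressible to any $t$-bounded decompressor (even with oracle access in the case of $A$), yet the interleaved sequence $A \uplus B$ carries a hidden short description because $A$ and $B$ are correlated in a way that only becomes visible once their bits are placed side by side. The natural mechanism is a one-way-type gadget in reverse: let $B$ be a genuinely $K$-incompressible sequence (which automatically is $t$-incompressible, as noted after the definition of $t$-incompressibility), and let the bits of $A$ be obtained from the bits of $B$ by a fast, easily computable but ``scrambling'' transformation that destroys incompressibility locally but is trivial to undo when you have both sequences aligned. Concretely, I would set $A_i = B_i \oplus h(i)$ for a suitable fixed $t$-computable sequence $h$, or more robustly $A = f(B)$ for an explicit length-preserving $t$-computable bijection $f$ whose inverse is also $t$-computable; then $(A \uplus B)\res 2n$ is described by a short program that stores $B\res n$ via its shortest code together with $f$, reconstructs $A\res n = f(B\res n)$, and interleaves — giving complexity $K_T((A\uplus B)\res 2n) \le K(B\res n) + O(1)$, which is eventually below $2n - c$ since $K(B\res n) \approx n$. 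That handles $t$-compressibility of $A\uplus B$.

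The delicate requirements are the two incompressibility claims. For $B$ we simply take $B$ to be Martin-L\"of random (equivalently $K$-incompressible); since $K_T(\sigma;t) \ge K(\sigma)$ always, $B$ is $t$-incompressible. The real work is forcing $A$ to be $t^B$-incompressible: a priori $A$ is easy to compress given $B$, since $A = f(B)$! So the XOR-with-$B$ idea is too strong and must be weakened. The fix is to make the correlation global rather than local: instead of letting $A_i$ depend on $B_i$, let $A$ be $K$-random in its own right and couple $A$ and $B$ only through a \emph{sparse} synchronization, e.g. encode into the low-order positions of $A \uplus B$ a short ``seed'' from which a $t$-bounded machine can regenerate a long but not-too-long stretch, while keeping the bulk of both $A$ and $B$ algorithmically random and mutually independent. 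Alternatively — and I think this is the cleaner route — fix $B$ to be $K$-random, let $A\res n$ itself be $K$-random relative to $B$ (so $A$ is $t^B$-incompressible for the trivial reason that $K_T(\cdot\mid\cdot) \ge K(\cdot\mid\cdot)$), and then arrange the interleaving indices so that $(A\uplus B)$ has a short \emph{$t$-bounded} description even though it has no short unbounded one — but that is impossible since $K_T \ge K$ would then force $A\uplus B$ to be $K$-incompressible too. So the compression of $A\uplus B$ must come from genuine redundancy, which means $A$ and $B$ \emph{must} be correlated; the only way to keep $A$ individually (and $t^B$-)incompressible while $A\uplus B$ is compressible is to exploit that the $t$-bounded oracle machine for $K_T(A\res n \mid B\res m)$ cannot afford to run the possibly slow decoding that the unbounded one could — i.e. we need $f^{-1}$ to be $t$-computable (used in compressing $A\uplus B$) but we do \emph{not} give the $A$-from-$B$ decompressor enough time, OR we make reconstructing $A\res n$ from $B\res m$ require reading $m \gg t(n)$ bits of the oracle. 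This last observation is the crux: let $A_i$ depend on $B_{g(i)}$ for a very fast-growing $g$, so that recovering $A\res n$ needs $B\res{g(n)}$ with $g(n)$ far larger than $t(n)$, making the oracle computation time out, while $(A\uplus B)\res 2n$ still only needs $B\res n$ to regenerate \emph{its own} $B$-part and the correlation shows up at longer scales.

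Given that analysis, the concrete construction I would write down is: pick $B$ Martin-L\"of random; define $A$ by $A_i = B_i \oplus B_{2^i}$ (or any coupling at exponentially separated indices); show (i) $A\res n$ is $t^B$-incompressible because any $t(n)$-bounded oracle machine sees only $B\res{t(n)}$, which for $n$ large is a proper prefix of $B\res{2^{n-1}}$, hence misses the bits $B_{2^i}$ for $i$ near $n$, and a padding/counting argument over the unseen random bits shows the conditional complexity stays $\ge n - O(1)$; show (ii) $B$ is $t$-incompressible since $K_T \ge K$ and $B$ is $K$-random; show (iii) $A\uplus B$ is $t$-compressible because — hmm, this is exactly where the exponential separation hurts me, since regenerating $A\res n$ from $B\res n$ now also times out. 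I expect this tension to be the main obstacle, and I would resolve it by making the compressible event sparse: $A\uplus B$ need only be compressible at \emph{infinitely many} lengths (per the definition of $t$-compressible), so I would plant, at a sparse infinite set of scales $n_k$, an explicit short description (e.g. $A\res{n_k}$ and $B\res{n_k}$ agree with a fixed $t$-computable string, chosen so as not to disturb the two incompressibility arguments, which only need $\ge n - c$ for \emph{some} constant $c$ and can tolerate finitely-much-compressible prefixes as long as infinitely many are not too compressed — but wait, incompressibility requires $\ge n-c$ for \emph{all} $n$, so I cannot plant bad prefixes). The honest resolution, then, is option one with a carefully tuned moderate gap: let $g(n)$ grow faster than $t$ but arrange the coupling so that the $B$-part of $A\uplus B$ is still self-describing; I will give $A$ as $f(B)$ where $f$ reads bits of $B$ at positions that are \emph{interleaved indices}, i.e. $A$'s $i$-th bit is a function of bits of $A\uplus B$ that lie at even positions only up to $2i$ — so the $A\uplus B$-decompressor, which already holds $B\res n$, gets $A\res n$ for free, while the separate $t^B$-machine for $A\res n\mid B\res m$ must, to exploit the same relation, essentially already know $A$, giving no savings; I then check that $K_T(A\res n\mid B\res m;t)\ge K(A\res n\mid B\res m)\ge n-c$ by choosing $A$ to be relatively random over $B$. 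I expect verifying that this last relativized $K$-incompressibility of $A$ over $B$ survives \emph{simultaneously} with the $t$-compressibility of $A\uplus B$ to be the subtle point, and I would handle it by an explicit oracle construction (a finite-extension / forcing argument) rather than a closed-form $f$.
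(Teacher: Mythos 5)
There is a genuine gap: you correctly isolate the central tension --- the $A$--$B$ correlation must be exploitable by a compressor for $A\uplus B$ yet invisible to a $t$-bounded oracle machine computing $A\res n$ from $B$ --- but none of your concrete gadgets resolves it, and the closing appeal to ``an explicit oracle construction (a finite-extension / forcing argument)'' is a placeholder rather than an argument. Each candidate you write down fails one of the three requirements. If $A=f(B)$ with $f$ computable online from prefixes of $B$ in time $t$, then $K_T(A\res n\mid B\res n;t)=O(1)$ and $A$ is badly $t^B$-compressible; your final variant, where $A_i$ is a function of the earlier oracle bits, has exactly this defect, and your simultaneous demand that $K(A\res n\mid B\res m)\ge n-c$ contradicts $A$ being a function of $B$. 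The coupling $A_i=B_i\oplus B_{2^i}$ fails on the other side, as you note ($A\uplus B$ is no longer compressible from its own prefix), and in fact it also fails the $A$ requirement: within $O(t(n))$ steps the oracle machine can read $B_{2^i}$ for all $i\le\log t(n)$, recovering the first $\log t(n)$ bits of $A$ for free, so $K_T(A\res n\mid B\res m;t)\le n-\log t(n)+O(1)$, which drops below $n-c$ for every $c$.

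The missing idea is to reverse the direction of the copy. The paper builds the pair in stages: at stage $s$ it appends a fresh block $\alpha_s$ of length $t(s)-t(s-1)$, chosen incompressible relative to everything constructed so far, to $A$, and appends $\beta_s\alpha_s$ to $B$, where $\beta_s$ is incompressible relative to $B_{s-1}$ and pads $B$ out to length $2^{t(s)^2}$. Thus a block sitting at position about $t(s)$ in $A$ reappears at position about $2^{t(s)^2}$ in $B$. The compressor for $(A\uplus B)\res 2\cdot 2^{t(s)^2}$ recovers that copy by looking \emph{backward} to the early $A$-positions of the interleaving, saving $t(s)-t(s-1)$ bits (eventually more than any fixed $c$) in time $O(t)$; every prefix of $B$ is genuinely $K$-incompressible, hence $t$-incompressible; and a $t$-bounded machine computing $A\res n$ from the oracle can only query $B\res O(t(n))$, which stops far short of position $2^{t(n)^2}$ where the only bits of $B$ correlated with the new part of $A$ live, so $A$ remains $t^B$-incompressible. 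Your analysis circles exactly this asymmetry --- backward lookups are free for the interleaved compressor, forward lookups are what the time bound forbids --- but never lands on the construction that realizes it.
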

\begin{proof}
We build such a pair in stages.  In the stage $s=0$, we set
$A_s=B_s=\lambda$. Then in $s \ge 1$, assume that we have inductively
defined prefixes $A_{s-1}$ of $A$ and $B_{s-1}$ of $B$, where
$|A_{s-1}|= t(s-1)$ and $|B_{s-1}|=2^{t(s-1)^2}$. We select strings
$\alpha_s$ and $\beta_s$ satisfying specific
incompressibility properties and then define
\begin{align*}
A_s  = A_{s-1} \alpha_s \quad\text{and}\quad B_s = B_{s-1}
\beta_s \alpha_s.
\end{align*}

We choose $\alpha_s$ and $\beta_s$ which satisfy following
incompressibility requirements.
\begin{enumerate}
\item Length requirements: $|\alpha_s| = t(s)-t(s-1)$ and $|\beta_s|=2^{t(s)^2}-2^{t(s-1)^2}-t(s)+t(s-1)$. These lengths ensure that $|A_s|=t(s)$ and
  $|B_s|=2^{t(s)^2}$.  
\item Incompressibility requirements for $B$: there is a constant $c$
  such that 
  $$K(B_{s-1}\delta) \ge |B_{s-1}\delta|-c$$ 
  for every $\delta  \preceq \beta_s\alpha_s$.
\item Incompressibility requirements for $A$ relative to $B$:
  $$K(A_{s-1} \tau \mid B_{s-1}\beta_s) \ge |A_{s-1}\tau|-c'$$ for
  some constant $c'$ and every $\tau \preceq \alpha_{s}$.
\end{enumerate}

It suffices to show we can find such strings $\alpha_s$ and $\beta_s$. We can select the strings in the following order. First,
select a string $\beta_s$ such that $K(\delta \mid
B_{s-1})>|\delta|-c$ for every prefix $\delta$ of $\beta_s$, and some
constant $c$.  Such a string exists, since the set of Martin-L\"of
random strings has measure $1$. Then select the string $\alpha_s$ to satisfy $K(\tau \mid A_{s-1}, B_{s-1}\beta_s) \ge |\tau| - c'$ for every prefix $\tau$ of $\alpha_s$, and a
constant $c'$. Each of these selections is possible because the set of
incompressible strings conditioned on any other strings is non-empty
(for example, see the Ample Excess Lemma \cite{MillerYu08}).

By the above construction it is clear that $B$ is $t$-incompressible
and $A \uplus B$ is $t$-compressible for any function $t(n)> n$ due to
the shared component $\alpha_s$ for all $s$ between $A$ and
$B$ as long as $t(.)$ is time-constructible which is indeed the case by our assumption in the beginning of the current section. 

We now show that $A$ is $t^B$-incompressible. Inductively, assume that
$A_{s-1}$ is $t$-incompressible given access to $B\res (2^{t(s-1)^2})$. By
construction, for every $\tau$ such that $A_{s-1} \prec \tau \preceq
A_{s}$ is $t$-incompressible given access to $B\res(2^{O(t(|\tau|))})$, since
we ensure that every prefix of $A_s$ is incompressible given $B_{s-1}
\beta_s$, i.e. $B \res (2^{t(s)^2}-t(s)+t(s-1))$, which is longer than $B \res
(2^{\omega(t(s))})$, the prefix of $B$ that $A_s$ can query within the time-bound
$t$.   \end{proof}

\section{Resource-bounded relative randomness and martingales}
\label{sbsc:rbm}
In this section, we show that the symmetry of relative randomness does
not hold for resource-bounded martingales. Let $t: \N \to \N$ be a
superlinear function. For any input $\sigma \in \Sigma^*$, we
henceforth restrict ourselves to martingales computed in time
$O(t(|\sigma|))$ and we define $t$-randomness accordingly.

\begin{definition}
A \emph{$t$-bounded martingale} is a martingale $d: \Sigma^* \to
\Rplus$ such that for all $w \in \Sigma^*$, $d(w)$ can be computed in
at most $O(t(|w|))$ steps.
\end{definition}

Unlike computably enumerable martingales, these martingales have to
terminate with the ultimate value of the bet in a finite number of
steps. The notion of success of a $t$-bounded martingale is the same
as that in the case of computably enumerable martingales.

\begin{definition}
We say that $X \in \Sigma^{\infty}$ is \emph{$t$-random} if there is
no $t$-bounded martingale which succeeds on $X$, and \emph{$t$-random
  with respect to $Y \in \Sigma^{\infty}$} if no $t$-bounded oracle
martingale $d^Z: \Sigma^* \to \Rplus$ exists such that $X \in
S^\infty[d^Y]$.
\end{definition}

\begin{lemma}
\label{lem:t_martingale_one}
There is a $t$-random sequence $B$ and a sequence $A$ which is
$t^B$-random, where $A \uplus B$ is $t$-nonrandom.
\end{lemma}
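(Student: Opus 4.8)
The plan is to construct the pair $(A,B)$ by a stagewise diagonalization that is completely analogous to the construction in Lemma~\ref{lem:t_incompressibility_two}, but now diagonalizing against $t$-bounded martingales rather than short programs. I would build prefixes $A_s, B_s$ of $A$ and $B$ in stages, where $B_s$ is much longer than $A_s$ (say $|A_s| = t(s)$ and $|B_s|$ is iterated-exponentially larger in $t(s)$, exactly as before), and at stage $s$ I would append to $A$ a fresh block $\alpha_s$ and to $B$ the block $\beta_s \alpha_s$, so that $\alpha_s$ is literally copied into both sequences. The point of the shared block $\alpha_s$ is that a $t$-bounded martingale reading $A \uplus B$ up to length $2|B_s|$ has, by that position, already seen $\alpha_s$ as part of the $B$-coordinate, and can therefore predict the bits of $\alpha_s$ appearing later in the $A$-coordinate with certainty, doubling its capital on each such bit; this makes $A \uplus B$ $t$-nonrandom. (One must check the time bound: since $t$ is time-constructible and superlinear, a martingale operating at length-scale $2|B_s|$ has enough time to locate and store the relevant earlier segment of its input, because $|\alpha_s|$ is polynomial in its reading position, cf. the analogous argument in Lemma~\ref{lem:t_incompressibility_two}.)

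Next I would argue $t$-randomness of $B$ and $t^B$-randomness of $A$. For $B$: the block $\beta_s$ should be chosen so that every prefix of $B_{s-1}\beta_s\alpha_s$ defeats all $t$-bounded martingales, i.e. $B_s$ looks random to every $t$-bounded betting strategy; this can be done greedily because at each finite stage only finitely many martingale values are relevant and the measure of sequences on which a fixed martingale succeeds is zero, so a "sufficiently random" continuation exists (an Ample-Excess-Lemma-style argument, or simply the fact that the union of success sets of $t$-bounded martingales up to a given length has measure less than $1$). The subtlety is that the later-appended $\alpha_s$ is determined by the $A$-side requirements, so I would fix the order of selection: first pick $\beta_s$ random relative to $B_{s-1}$, then pick $\alpha_s$ to be random relative to both $A_{s-1}$ and $B_{s-1}\beta_s$, so that appending $\alpha_s$ to $B_{s-1}\beta_s$ still does not help any $t$-bounded martingale on $B$. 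For $A$: $A = A_{s-1}\alpha_s\cdots$ must be $t^B$-random, meaning no $t$-bounded oracle martingale $d^B$ succeeds on $A$. Here I use the length discrepancy exactly as in Lemma~\ref{lem:t_incompressibility_two}: while reading $A \res |A_s|$, an oracle martingale running in time $t$ can only query $B$ up to length roughly $2^{O(t(|A_s|))}$, which is far shorter than $|B_{s-1}| = 2^{t(s-1)^2}$ only for the portion of $B$ that has been committed — wait, rather, the point is that the oracle can only reach a short prefix of $B$, and within that prefix $\alpha_s$ has not yet appeared (since $\alpha_s$ sits deep inside $B_s$, after the enormous block $\beta_s$), so the oracle is useless for predicting $\alpha_s$; thus $A$ restricted to the relevant scales is random even relative to the accessible part of $B$.

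The main obstacle I expect is bookkeeping the interaction between the time bounds and the block lengths so that three conditions hold simultaneously: (i) a $t$-bounded martingale on $A \uplus B$ can reach back to the copy of $\alpha_s$ that already occurred on the $B$-side (needs $\alpha_s$ short relative to the current reading position on $A \uplus B$), (ii) a $t^B$-bounded oracle martingale on $A$ cannot reach forward into $B$ far enough to see $\alpha_s$ (needs $\alpha_s$ buried far inside $B_s$, i.e. $\beta_s$ enormous relative to the $t$-reachable prefix of $B$), and (iii) a $t$-bounded martingale on $B$ cannot exploit the repeated $\alpha_s$ within $B_s$ itself (needs $\beta_s$ so long that by the time $\alpha_s$'s copy in $B$ is read, any martingale with budget $t(\text{position})$ cannot have retained $\alpha_s$'s earlier occurrence — but actually $\alpha_s$ occurs only once inside $B_s$, namely at the very end, so this is automatic). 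Reconciling (i) and (ii) is the delicate part: it works precisely because ``reading position on $A \uplus B$'' at the moment $\alpha_s$ reappears is about $2|B_s|$, which is polynomially related to $|B_s|$, whereas ``$B$-length reachable by a $t$-bounded oracle reading $A$ at length $|A_s|$'' is only $2^{O(t(|A_s|))} = 2^{O(t(t(s)))}$, a doubly-iterated $t$, which is dwarfed by $|B_{s-1}| = 2^{t(s-1)^2}$ choosing the stage schedule to grow fast enough. I would close the proof by the same inductive invariant as in Lemma~\ref{lem:t_incompressibility_two}, carrying forward at each stage the statement ``$B_s$ is $t$-random-looking and $A_s$ is $t$-random-looking given the $t$-reachable prefix of $B$,'' and noting that the constants $c, c'$ in the success/randomness thresholds stay uniform across stages.
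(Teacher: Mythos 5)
Your construction is identical to the paper's (stagewise $A_s = A_{s-1}\alpha_s$, $B_s = B_{s-1}\beta_s\alpha_s$ with $|A_s|=t(s)$ and $|B_s|=2^{t(s)^2}$, choosing $\beta_s$ and then $\alpha_s$ as sufficiently random continuations), and your arguments for the $t$-randomness of $B$ and the $t^B$-randomness of $A$ match the paper's. However, your argument that $A\uplus B$ is $t$-nonrandom has the two occurrences of $\alpha_s$ the wrong way around, and as written it fails. In the construction you describe, $\alpha_s$ occurs in $A$ at positions $t(s-1),\dots,t(s)-1$, hence at interleaved positions around $2t(s)$, while its copy in $B$ sits at the tail of $B_s$, hence at interleaved positions around $2\cdot 2^{t(s)^2}$. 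So by the time a martingale on $A\uplus B$ has read up to length $2|B_s|$ and ``has seen $\alpha_s$ as part of the $B$-coordinate,'' there are no bits of $\alpha_s$ ``appearing later in the $A$-coordinate'' left to bet on --- those bits went by at position $\approx 2t(s)$. The correct mechanism is the reverse: upon reaching the $B$-coordinate positions near $2\cdot 2^{t(s)^2}$, the martingale looks \emph{back} at the $A$-coordinate positions near $2t(s)$, where $\alpha_s$ has already been revealed, and bets with certainty on the upcoming $B$-bits; this is what the paper means by saying that a suffix of $B$ is computable from the relevant region of $A$, and superlinearity of $t$ gives ample time for the look-back at that length scale.

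The direction is not cosmetic. If you rearranged the construction so that the $B$-copy of $\alpha_s$ genuinely preceded the $A$-copy (which is what your stated mechanism requires), then a $t$-bounded oracle martingale reading $A$ could query that shallow position of $B$ and predict $\alpha_s$ in $A$, making $A$ $t^B$-\emph{non}random --- this is precisely the configuration the paper uses for the converse lemma that follows Lemma~\ref{lem:t_martingale_one}. Indeed, your own argument for the $t^B$-randomness of $A$ (that the $B$-copy of $\alpha_s$ is buried beyond the oracle's reach, after the enormous $\beta_s$) already presupposes the correct ordering, so the two halves of your write-up are mutually inconsistent; flipping the direction in the $A\uplus B$ step reconciles them and recovers the paper's proof. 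One further small caution: the depth of $B$ reachable by the oracle martingale while reading $A\res n$ is governed by the indices it can write down in time $O(t(n))$, so the comparison you need is between $2^{t(s)^2}$ and $2^{O(t(t(s)))}$ (not $|B_{s-1}|$); as you note, the block schedule should be chosen so that the former dominates.
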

The idea of the construction is that at some positions, substrings in
$A$ are copied exactly from regions of $B$. These regions of $B$
sufficiently far so that it is not possible to consult the relevant
region in time $O(t)$. Of course, $A \uplus B$ is non-random since a
significant suffix of $B$ can be computed directly from the relevant
region of $A$.

Elsewhere, if $B$ is random, and $A$ random relative to $B$, then we
can make $B$ $t$-random, and $A$ to be $t^B$-random.

In short, the construction ensures that $B$ has sufficient time to
look into the prefix of $A$, but $A$ does not have time to look into
the extension of $B$.
\begin{proof}
We construct two sequences $A$ and $B$ in stages, where at stage
$s=0$, we have $A_s=B_s=\lambda$. At stage $s \ge 1$, let us assume
that we have inductively defined prefixes $A_{s-1}$ of $A$ and
$B_{s-1}$ of $B$ and additionally $|A_{s-1}|=t(s-1)$ and
$|B_{s-1}|=2^{t(s-1)^2}$. We select strings $\alpha_s$ and $\beta_s$ satisfying specific randomness properties and then define
\begin{align*}
A_s  &= A_{s-1} \alpha_s \quad\text{and}\quad B_s  = B_{s-1}
\beta_s \alpha_s  
\end{align*}
We choose strings $\alpha_s$ and $\beta_s$ which satisfy
all the following randomness requirements.
\begin{enumerate}
\item Length requirements: $|\alpha_s| = t(s)-t(s-1)$ and $|\beta_s|=2^{t(s)^2}-2^{t(s-1)^2}-t(s)+t(s-1)$. These lengths ensure that $|A_s|=t(s)$ and $|B_s|=2^{t(s)^2}$.  
\item Randomness requirements for $B$: for some universal martingale
  $d^{B_{s-1}}$, for every $\delta \preceq \beta_s$, $d(B_{s-1}\delta)
  \le d(B_{s-1})$.
\item Randomness requirements for $A$ relative to $B$: for some
  universal oracle martingale $d^{B_{s-1}\beta_{s}}$, for every $\tau
  \preceq \alpha_s$, $d^{B_{s-1}\beta_{s}}(A_{s-1}\tau) \le
  d^{B_{s-1}}(A_{s-1})$.
\end{enumerate}

It suffices to show we can find such strings $\alpha_s$ and $\beta_s$. We can select the strings in the following order. First,
select a string $\beta_s$ which satisfies the fact that for a
universal martingale $d$, and for every $\sigma \preceq \beta_s$,
$d(B_{s-1}\sigma) \le d(B_{s-1})$.  Such a string $\beta_s$ exists
because the martingale property together with the Markov inequality
allows us to show that for any string $\kappa$ and any $n$, the set
$\{\rho \in \Sigma^n \mid \forall \sigma \preceq\rho,\ d(\kappa\sigma)
\le d(\kappa)\}$ has positive probability. By a similar argument we
can then select the string $\alpha_s$ such that for a
universal martingale $d^{B_{s-1}\beta_s}$, and for every $\tau \preceq
\alpha_s$, $d^{B_{s-1}\beta_s}(A_{s-1}\tau) \le
d^{B_{s-1}}(A_{s-1})$.

By construction it is clear that $B$ is Martin-L\"of random and $A
\uplus B$ is not $t$-random for any superlinear $t$ due to the shared
component $\alpha_s$ for all $s$ between $A$ and $B$ as long as the function $t(.)$ is time-constructible which is indeed the case by our assumption. However we can
show that $A$ is $t^B$-random. By the construction it can be noted
that any martingale $d^B$, can gain capital on the stretch $\alpha_s$
only if it can query the corresponding portion of the sequence $B$. To
calculate the value of $d^B(A \res n)$ it needs to query the index
bigger than $2^{\omega(t(n))}$ of the sequence $B$, which is
impossible in the given time-bound.  \end{proof}

Now, we consider the converse.

\begin{lemma}
Let $B$ be an arbitrary $t$-random sequence. Then there is a
$t$-random sequence $A$ which is $t^B$-nonrandom, such that $B$ is
$t^A$-random.
\end{lemma}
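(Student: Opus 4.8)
The plan is to build $A$ in stages, alternating two kinds of blocks: short ``copied'' blocks, each being a fresh block of $B$ read from a relatively early position, and much longer ``padding'' blocks chosen to be Martin-L\"of random relative to $B$ and to the part of $A$ already built. At stage $s$ I would fix a short length $\ell_s$ (polynomial in $s$), read the next $\ell_s$ unused bits of $B$ to form $C_s = B[p_{s-1}\dots p_s-1]$ with $p_s = p_{s-1}+\ell_s$, append $C_s$, and then append a padding block $\gamma_s$ whose length is chosen astronomically larger than $p_s$ --- precisely so that the length $L_s$ of $A$ after stage $s$ satisfies $L_s \ge 2^{t(p_{s+1})}$. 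Such a $\gamma_s$, random relative to all prior data, exists since the set of such strings has positive measure. Since each $C_s$ is copied from an early position of $B$, a time-$t$ martingale watching $A \uplus B$ can recover it from bits of $B$ it has already seen, so as a free by-product $A \uplus B$ is $t$-nonrandom, matching the remark in the introduction.

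Showing $A$ is $t^B$-nonrandom is the easy step: I would use the oracle martingale that bets nothing on the padding regions and, on the block $C_s$ (which begins at $A$-position $L_{s-1}$), queries $B$ at positions $p_{s-1},\dots,p_s-1$ --- cheap, since $p_s$ is tiny next to $L_{s-1}$ and well inside the $O(t(L_{s-1}))$ budget --- and bets all its capital on the correct bit, multiplying its capital by $2^{\ell_s}$; this martingale is $t$-bounded and its capital along $A$ tends to infinity. The substantive part is the pair of assertions ``$A$ is $t$-random'' and ``$B$ is $t^A$-random'', both of which I would attack by contradiction: from a successful $t$-bounded martingale I would manufacture a martingale witnessing non-randomness of $B$ (respectively, of the paddings relative to $B$). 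For ``$B$ is $t^A$-random'': a $t$-bounded oracle martingale $d^A$ betting near position $m$ of $B$ can read only $A$ up to position $2^{t(m)}$, and the length schedule forces that initial segment of $A$ to contain only copied blocks $C_i$ with $p_i \le m$ --- bits of $B$ that $d^A$ has already scanned --- together with paddings that are random relative to $B$; one then simulates $d^A$ by a martingale on $B$ alone that reconstructs the relevant copied blocks from $B$ and treats the fixed padding blocks as hard-wired data. The argument that $A$ is $t$-random runs in parallel: a gain on a padding block is excluded by the choice of $\gamma_s$, while a gain on a copied block would amount to predicting a fresh block of $B$ from earlier blocks of $B$, which should be excluded by the randomness hypothesis on $B$.

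The hard part will be exactly the bookkeeping in that last step: one must check that replacing the oracle by an ordinary or hard-wired computation inflates the running time only benignly (so that the derived martingale still contradicts the randomness of $B$), and that the padding blocks really do carry no predictive information about $B$ --- which is why they are taken Martin-L\"of random relative to $B$, and why one must separately verify that $B$ remains random relative to the concatenation of all paddings. This interplay between the time budget, the length of the oracle prefix a time-$t$ machine can even name, and the ``advice'' role of the padding is precisely the technical obstacle flagged in the introduction, and is what motivates the lookahead-martingale model of the final section. The remaining ingredients --- existence of the random paddings at each stage, time-constructibility of the length schedule, and the arithmetic making $L_s \ge 2^{t(p_{s+1})}$ compatible with $\ell_s$ polynomial --- are routine.
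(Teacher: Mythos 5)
Your construction is essentially the paper's: its (three-sentence) proof sketch takes $A$ to be a Martin-L\"of random sequence except for short strings at positions $2^{t(n)^2}$ copied from near $B_n$, so that a $t$-bounded oracle martingale betting on $A$ at such a position can cheaply query the early part of $B$, while a $t$-bounded oracle martingale betting on $B$ near position $n$ cannot reach position $2^{t(n)^2}$ of $A$ within its $O(t(n))$ budget. Your block-and-padding bookkeeping and the issues you flag -- in particular that the paddings must ultimately be treated as an oracle relative to which $B$ remains random (via the classical van Lambalgen theorem, using that $B$ is in fact taken Martin-L\"of random in the paper's proof) rather than literally ``hard-wired,'' since they constitute unboundedly much non-computable data -- are consistent with, and considerably more explicit than, what the paper actually writes.
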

\begin{proof}
  (Sketch) The construction is similar to that of Lemma
  \ref{lem:t_martingale_one}. Let $B$ be a Martin-L\"of random
  sequence, and $A$ be a Martin-L\"of random sequence, except for a
  short string at $A_{2^{t(n)^2}}$ identical to a string at $B_n$, $n
  \in \N$. We see that $A$ is $t^B$-nonrandom. However, $B$ does not
  have sufficient time to consult the relevant position in $A$, and is
  $t^A$-random.   \end{proof} However, in the above example, $A
\uplus B$ is $t$-nonrandom, since the substring at $(A \uplus B)_{2
  \times 2^{t(n)^2}}$ is computable from the prefix of $(A \uplus B)
\res 2n$. Thus the identification of relative randomness of $A$ and
$B$ with the randomness of $A \uplus B$ breaks down in time-bounded
settings.
\begin{corollary}
There are sequences $A$ and $B$ such that $A$ is $t^B$ nonrandom, $A
\uplus B$ is $t$-nonrandom, and $B$ is $t^A$ random.
\end{corollary}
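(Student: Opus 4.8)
The plan is to read this corollary off the immediately preceding lemma, supplying the only thing that lemma's statement leaves out: that the pair it produces already makes $A \uplus B$ be $t$-nonrandom. First I would fix $B$ to be any Martin-L\"of random sequence; such a sequence exists, and since any $t$-bounded martingale is in particular a computably enumerable martingale, a Martin-L\"of random sequence is $t$-random for our superlinear, time-constructible bound $t$. Applying the preceding lemma to this $B$ produces a sequence $A$ that is $t$-random, is $t^B$-nonrandom, and relative to which $B$ is still $t^A$-random. Two of the three required conclusions are thus immediate, and it remains only to verify the third, that $A \uplus B$ is $t$-nonrandom.

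For that I would write down an explicit $t$-bounded martingale that succeeds on $A \uplus B$. Recall the shape of the pair from the lemma's construction: for infinitely many $n$, a short block of $A$ sitting near position $2^{t(n)^2}$ is a verbatim copy of a block of $B$ sitting near position $n$. In the interleaving $A \uplus B = A_0 B_0 A_1 B_1 \cdots$, the copied block inside $A$ occupies even positions near $2\cdot 2^{t(n)^2}$, while the original block inside $B$ occupies odd positions near $2n+1$, which lie deep inside the prefix already scanned by the time the copy is reached. I would define $d_{AB}$ to place trivial (capital-preserving) bets at every position except those even positions that are known to carry a copied bit, and on each such position to wager its entire current capital on the value it has already read at the matching position of $B$ within the input; since $t$ is time-constructible, the set of copy positions is computable. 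On the true sequence every copied bit is predicted correctly, so $d_{AB}$ doubles its capital at each of the infinitely many copied blocks, giving $\limsup_m d_{AB}((A \uplus B)\res m) = \infty$.

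The one point deserving care is the time bound, and it is easily met: when $d_{AB}$ bets on the copied block near position $2\cdot 2^{t(n)^2}$ of $A \uplus B$, its allotted running time is $O(t(2\cdot 2^{t(n)^2}))$, which is far more than enough to write down the index $\approx 2n+1$ of the matching bit of $B$ (an index of only $O(t(n)^2)$ bits), fetch that bit from the already-read portion of the input, and output the wager, while locating the copy positions themselves costs only the time-constructibility of $t$. Hence $d_{AB}$ is a genuine $t$-bounded martingale, $A \uplus B$ is $t$-nonrandom, and combined with the lemma this establishes the corollary. I do not anticipate any real obstacle here: all the work is done by the preceding lemma, and the corollary merely records the side remark already made in the discussion following it.
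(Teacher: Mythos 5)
Your proposal is correct and follows the paper's own route: the paper derives this corollary from the remark immediately preceding it, namely that for the pair produced by the previous lemma the copied block of $A$ at position $2^{t(n)^2}$ makes $(A\uplus B)$ predictable from its own (already-scanned) prefix near position $2n$, which is exactly the martingale you make explicit. Your elaboration of the time-bound check and of the betting strategy on the copy positions is a faithful filling-in of that one-line observation.
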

Now let us first make an observation.
\begin{lemma}
If $B$ is $t$-nonrandom then for any sequence $A$, $A\uplus B$ is
$t$-non-random.
\end{lemma}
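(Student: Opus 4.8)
The plan is to reuse the martingale from equation~(\ref{eqn:martingale_B_AB}) and check that it respects the time bound. Since $B$ is $t$-nonrandom, fix a $t$-bounded martingale $d_B$ with $B \in S^\infty[d_B]$. For a string $w \in \Sigma^*$, let $w_B$ denote the subsequence of bits of $w$ occurring in the odd positions $1, 3, 5, \dots$ --- the ``$B$-coordinates'' of $A \uplus B$ --- and define $d_{AB}(w) = d_B(w_B)$. Equivalently, $d_{AB}$ copies the wagers of $d_B$ on the $B$-bits and places even bets (wagers nothing) on the $A$-bits; this is exactly the function of~(\ref{eqn:martingale_B_AB}), so the verification that $d_{AB}$ is a martingale is identical to the one given there and I would not repeat it.

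The first thing to verify is the resource bound. On input $w$ with $|w| = \ell$, extracting $w_B$ costs $O(\ell)$ steps and yields a string with $|w_B| \le \lceil \ell/2 \rceil$; then simulating $d_B$ on $w_B$ costs $O(t(|w_B|))$ steps. Since $t$ is time-constructible and superlinear (so that $t$ evaluated at roughly half the input length is still $O(t(\ell))$), the total is $O(t(\ell))$, and hence $d_{AB}$ is a $t$-bounded martingale.

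The second thing is success. For every $m$, the odd-position bits of $(A \uplus B) \res 2m = A_0 B_0 \cdots A_{m-1} B_{m-1}$ are precisely $B \res m$, so $d_{AB}((A \uplus B) \res 2m) = d_B(B \res m)$. Because $d_B$ succeeds on $B$ we have $\limsup_m d_B(B \res m) = \infty$, whence $\limsup_n d_{AB}((A \uplus B) \res n) = \infty$; that is, $d_{AB}$ succeeds on $A \uplus B$, so $A \uplus B$ is $t$-nonrandom. The only delicate point in the whole argument is the time-bound bookkeeping in the previous paragraph, and even that is routine; everything else is inherited verbatim from the unbounded case. It is worth noting that, in contrast to the converse direction and to Lemma~\ref{lem:t_martingale_one}, this implication uses no oracle access whatsoever, which is exactly why it survives the passage to time-bounded martingales unscathed.
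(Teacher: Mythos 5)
Your proposal is correct and follows exactly the paper's approach: the paper's proof also just reuses the martingale $d_{AB}$ of equation~(\ref{eqn:martingale_B_AB}), which wagers nothing on the $A$-positions and copies $d_B$ on the $B$-positions. Your added verification that $d_{AB}$ is computable in time $O(t(\ell))$ (since extracting the $B$-coordinates is linear and $t$ is monotone and superlinear) is the only detail the paper leaves implicit, and it is handled correctly.
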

\begin{proof}
If $d_B$ be a $t$-martingale witnesses the fact that $B$ is
$t$-nonrandom, then the martingale $d_{AB}$ defined in (1) is
a $t$-martingale that succeeds on $A\uplus B$.
  
\end{proof}
We wish to investigate the question of $t$-randomness of $A \uplus B$
given that $A$ is $t^B$-nonrandom. We have weak converses which we now
describe. The above corollary suggests that we stipulate ``honest''
reductions - that a bit at position $n$ in $A$ cannot depend on bits
at positions $o(t^{-1}(n))$ in $B$. With this stipulation, we have the
following weak converse to Lemma \ref{lem:t_martingale_one}. First, we
consider a restricted class of reductions from $A$ to $B$.
\begin{definition}
We say that an infinite sequence $A$ is \emph{infinitely often
  reducible to} $B$ in time $t$ via $f$, written $A \le^t_{i.o} B$, if
$\{ n \in \N \mid f(B[n-t(n) \dots n+t(n)-1]) = A_n\}$ is computable
in time $O(t(n))$, i.e., $t$-computable.
\end{definition}
Note that we have incorporated an honesty requirement into the
definition. 
\begin{definition}
We say that a function $f : \Sigma^* \to \Sigma$ is \emph{strongly
  influenced by the last index} if 
for every $\sigma \in \Sigma^n$, $f(\sigma) \ne f((\sigma \res
n-1)  \overline{\sigma_n})$.
\end{definition}
The function that projects the last bit of its input, and the function
computing the parity of all input bits are two examples of such
functions. 
\begin{lemma}
Let $B$ be $t$-random and $A \le^t_{i.o} B$ via a function that is
strongly influenced by the last index. Then $A \uplus B$ is also
$t$-nonrandom. 
\end{lemma}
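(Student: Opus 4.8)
The plan is to build an $O(t(\cdot))$-time-bounded martingale that succeeds on $A \uplus B$. If $B$ is $t$-nonrandom the conclusion already follows from the earlier observation that $t$-nonrandomness of $B$ passes to $A \uplus B$, so the substantive case is that $B$ is $t$-random; the hypothesis on $B$ serves only to isolate that case. Write $W_n := B[\,n-t(n)\,\dots\,n+t(n)-1\,]$ for the honest window (left endpoint clamped at $0$) and $I := \{\, n : f(W_n) = A_n \,\}$, which is infinite and $t$-computable by hypothesis. The key point is that on $I$ the last bit of the window, $B_{n+t(n)-1}$, is forced by strictly earlier bits: since $f$ is strongly influenced by its last index, $f(\rho 0)\ne f(\rho 1)$ for every string $\rho$, so, writing $\rho_n := B[\,n-t(n)\,\dots\,n+t(n)-2\,]$ for the length-$(2t(n)-1)$ prefix of $W_n$, there is a unique bit $b^*_n$ with $f(\rho_n\,b^*_n)=A_n$, and $n\in I$ holds precisely when $B_{n+t(n)-1}=b^*_n$.

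Next I would transfer this to $A \uplus B$, where $A_n$ occupies position $2n$ and $B_j$ occupies position $2j+1$: the bit $B_{n+t(n)-1}$ then sits at position $p(n):=2(n+t(n)-1)+1$, while $A_n$ and all bits of $\rho_n$ lie at positions strictly below $p(n)$. This gap is exactly what the honesty clause of $\le^t_{i.o}$ buys — it forbids $A_n$ from depending on bits of $B$ outside a $t(n)$-radius window, so the bit we learn to predict from $A_n$ and $\rho_n$ lies far enough to the right that a martingale scanning $A \uplus B$ has already read what it needs. For a set $J$ of attacked indices (to be pinned down), define $d$ by $d(\lambda)=1$; on a prefix $w$ of length $p(n)$ with $n\in J$, $d(w\,b^*_n):=2\,d(w)$ and $d(w\,\overline{b^*_n}):=0$, with $b^*_n$ recomputed from $w$; and $d(w0)=d(w1)=d(w)$ otherwise. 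Time-constructibility of $t$ lets the machine, on a length-$q$ prefix, test whether $q=p(n)$ for some $n$, recover that $n$, locate $A_n$ and $\rho_n$ in $w$, and evaluate $f$ on $\rho_n 0$ and $\rho_n 1$; one such evaluation costs $O(t(n))=O(t(q))$ because it is a sub-step of deciding $n\in I$ in $O(t(n))$ time and $n<q$, and by keeping $J$ sparse enough — harmless, as we only need $J$ infinite, and possible since $t$ is super-linear — the cost of recomputing $d(w)$, which in principle revisits all attack positions below $|w|$, stays within $O(t(|w|))$. Thus $d$ is $O(t(\cdot))$-bounded, and provided $d$ loses no bet along $A \uplus B$ and places infinitely many bets there, it doubles its capital infinitely often, so $\limsup_k d((A\uplus B)\res k)=\infty$ and $A \uplus B$ is $t$-nonrandom.

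The step I expect to be the real obstacle is the choice of $J$: a bet placed at $p(n)$ with $n\notin I$ empties the capital, yet one cannot screen such $n$ out in advance, since $n\in I$ is by definition the statement $B_{n+t(n)-1}=b^*_n$, which concerns the very bit being bet on and is not present in the length-$p(n)$ prefix — and, $B$ being $t$-random, that bit is not otherwise predictable from the prefix. Consequently no infinite $t$-computable $J\subseteq I$ that is certifiable from that prefix exists in general, and there are two natural ways around this. One is to arrange for $J$ to meet $I$ densely enough that a small \emph{proportional} bet together with the ``savings account'' device from the proof of Lemma~\ref{lem:martingaleconverse} keeps the capital diverging. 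The other is to work in the regime in which $\N\setminus I$ is finite, where taking $J$ to be a sufficiently sparse infinite subset of $\{\,n\ge n_0\,\}$ with $n_0$ past the last non-member of $I$ makes the all-or-nothing martingale above lose no bet on $A \uplus B$. Settling on the precise hypothesis on $I$ (equivalently, the right $J$) and checking the time bounds is the delicate part; granting that, the martingale property, the correctness of every placed bet, and the divergence of the capital all follow at once from the forcing identity $b^*_n=B_{n+t(n)-1}$ for $n\in I$.
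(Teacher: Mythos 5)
Your first two paragraphs reconstruct exactly the paper's argument: the forced last window bit $b^*_n$ coming from strong influence of the last index, the observation that $A_n$ and the rest of the window all lie strictly before position $2(n+t(n)-1)+1$ in $A\uplus B$, and a double-or-nothing martingale at those positions. The gap is entirely in your third paragraph, where you declare that membership in $I$ ``cannot be screened out in advance'' and therefore weaken the statement (requiring $\N\setminus I$ finite, or a density hypothesis plus a savings account). This obstacle is not real: the definition of $A \le^t_{i.o} B$ does not merely say that $I=\{\,n : f(B[n-t(n)\dots n+t(n)-1])=A_n\,\}$ is infinite --- it stipulates that this \emph{set of indices is $t$-computable}, i.e.\ decidable by a fixed machine in time $O(t(n))$ on input $n$, with no reference to the prefix of $A\uplus B$ at all. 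You conflated ``decidable by the hypothesized $t$-time machine'' with ``certifiable from the length-$p(n)$ prefix''; the former is an assumption handed to you, however strong it may look for a $t$-random $B$, and it is exactly what the lemma's hypothesis is for. Taking $J=I$, every bet your martingale places is correct by the very definition of $I$, it never loses, and since $I$ is infinite the capital doubles infinitely often. This is the paper's proof verbatim: it bets only at positions $T=\{2(i+t(i))+1 \mid i\in S\}$ for the $t$-computable set $S=I$ and doubles there.

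Two smaller points. First, you are right that the hypothesis ``$B$ is $t$-random'' plays no role in the construction (the paper does not use it either; it only rules out the degenerate case already covered by the preceding observation). Second, your time-bound bookkeeping (recomputing $d$ along the whole prefix, sparsifying $J$) is unnecessary once $J=I$: on a prefix of length $q$ the machine decides in $O(t(q))$ time whether $q$ is an attack position, and the martingale value can be maintained incrementally since the bet is deterministic double-or-nothing; no sparsification is needed. So the core idea is the right one, but as written the proposal proves a different (weaker or differently-hypothesized) statement rather than the lemma as stated.
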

\begin{proof}
Consider the $t$-computable set of positions $S = \{ n \mid f(B[n-t(n)
  \dots n+t(n)-1]) = A_n \}$ where $A$ queries $B$. We define a
martingale $d$ with initial capital 1 and which bets evenly on all
positions except those in the set $T$ defined by
$$T = \{ 2 (i+t(i)) + 1 \mid i \in S\}.$$ 
For positions $2(i+t(i))+1 \in T$, sets $d(A_0 B_0 \dots A_{i+t(i)}b)$
to $2 d(A_0 \dots A_{i+t(i)})$ if $f((B \res i+t(i)-1) b) = A_i$, and
to $0$ otherwise. Then $A \uplus B \in S^\infty[d]$.  
\end{proof}
A second weak converse can be obtained by assuming that the
$t$-martingale succeeds on the interleaved sequence in a specific
manner.
\begin{definition}
We say that a pair of sequences $(A,B)$ is $t$-resilient if   
\begin{enumerate}
\item For every oracle martingale $h$ runs in time $O(t(n))$,
  $\limsup_{n\to \infty} h^{B\res n-1}(A\res n) < \infty$.
\item For every oracle martingale $g$ runs in time $O(t(n))$,
  $\limsup_{n \to \infty} g^{A\res n}(B \res n) < \infty$.
\end{enumerate}
\end{definition}

% even one bit should not be guessed. any bit, or the last?

%% As defined, the class of $t$-resilient one-way martingales is
%% incomparable with that of one-way functions considered in
%% computational complexity theory or cryptography. On one hand, we
%% require the inverter to be uniform over all lengths, whereas in the
%% cryptographic and complexity-theoretic settings, we allow the
%% inversion function to be different for different lengths of inputs,
%% \emph{i.e.} circuits. In this sense, $t$-resilient one-way martingales
%% are weaker. On the other hand, we want $d$ to be resilient to
%% inversion functions which have access to partial information of $w$
%% and $d^w$ and be unpredictable at all positions. This makes
%% $t$-resilience stronger.

We say that a martingale $d$ wins at position $i$ on a sequence $X$
if $d(X \res i) > d(X \res i-1)$. 
 
\begin{lemma}
$A\uplus B$ is $t$-random iff $(A,B)$ is a $t$-resilient pair.
\end{lemma}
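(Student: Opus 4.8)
The plan is to prove both directions of the equivalence ``$A \uplus B$ is $t$-random iff $(A,B)$ is $t$-resilient'' by reducing to the definition of $t$-resilience, exploiting the fact that a $t$-bounded martingale reading $(A\uplus B)\res 2n$ is effectively reading $A\res n$ together with $B\res n$ in a way that respects the honesty constraints built into the two clauses.

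For the forward direction (contrapositive), suppose $(A,B)$ is not $t$-resilient. Then one of the two clauses fails. If clause (1) fails, there is an oracle martingale $h$ running in time $O(t(n))$ with $\limsup_n h^{B\res n-1}(A\res n) = \infty$. I would construct a $t$-bounded (non-oracle) martingale $d$ on $\Sigma^*$ that simulates $h$ by splitting its input $(A\uplus B)\res 2n$ into its even-indexed bits (which form $A\res n$) and odd-indexed bits (which form $B\res n$), feeding the odd bits to $h$ as the oracle and the even bits as the betting sequence; $d$ bets on even positions of $A\uplus B$ exactly as $h$ bets on $A$, and bets trivially (evenly, not changing capital) on the odd positions. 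Because $h$ only queries $B\res n-1$ when reading $A\res n$, all the oracle bits needed are already present in the prefix of $A\uplus B$ that $d$ has read, so $d$ is genuinely $t$-bounded. Then $d$ succeeds on $A\uplus B$ whenever $h^{B}$ succeeds on $A$. Symmetrically, if clause (2) fails via $g$ with $\limsup_n g^{A\res n}(B\res n) = \infty$, I build $d$ that bets on odd positions of $A\uplus B$ as $g$ bets on $B$, using the even bits read so far as the oracle $A\res n$; here again the honesty is favorable since to bet on $B_{n-1}$ (the $(2n-1)$-st bit of $A\uplus B$, $0$-indexed) one has already seen $A\res n$. Combining both cases gives a $t$-bounded martingale succeeding on $A\uplus B$, so $A\uplus B$ is $t$-nonrandom.

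For the converse, suppose $A\uplus B$ is $t$-nonrandom, witnessed by a $t$-bounded martingale $d$ with $\limsup_n d((A\uplus B)\res n) = \infty$. I want to extract from $d$ either an oracle martingale violating clause (1) or one violating clause (2). The natural construction: define $h^{Z}(\sigma)$, for $|\sigma| = n$, to be obtained from $d$ by ``integrating out'' the role of the odd positions while fixing the oracle bits $Z\res n-1$ to supply them, i.e. $h^{Z}(\sigma) = d(\sigma_0 Z_0 \sigma_1 Z_1 \cdots)$ appropriately normalized, which is a martingale in $\sigma$ because $d$ is a martingale in its even coordinates when the odd coordinates are held fixed; and symmetrically $g^{Z}(\tau)$ using $d$ with the even positions fixed to $Z$. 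The point is that if $d$ succeeds on $A\uplus B$, then along that single sequence either the ``capital gained on even positions'' is unbounded or the ``capital gained on odd positions'' is unbounded (or both); in the first case $h^{B}$ succeeds on $A$, in the second $g^{A}$ succeeds on $B$. Making this precise requires the savings-account trick (as used in Lemma~\ref{lem:martingaleconverse}) to convert $\limsup$ success into a clean decomposition of capital, together with checking that $h,g$ remain $O(t(n))$-computable — this is where the honesty in the definition of $t$-resilience is exactly what is needed, since $h^{B\res n-1}$ reading $A\res n$ only invokes $d$ on $(A\uplus B)\res 2n-1$, within the time budget.

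The main obstacle I expect is the converse direction, specifically justifying the clean split of $d$'s winnings into an ``$A$-part'' and a ``$B$-part'' and ensuring at least one part is unbounded along $A\uplus B$. A single martingale succeeding via $\limsup$ may oscillate, gaining on even positions and losing on odd positions alternately, so naively $h^B$ and $g^A$ might each fail; the savings-account martingale for $d$ fixes the oscillation globally, but one still must argue that strong success of the savings martingale forces unbounded net gain on at least one of the two sublayers — this likely needs a pigeonhole/subsequence argument on the two-part capital decomposition, and care that the derived $h$ and $g$ are honest (query only the already-revealed oracle prefix) so they witness failure of $t$-resilience rather than of some non-honest variant. The length-bookkeeping (index $2n$ versus $2n-1$, oracle prefix $B\res n-1$ versus $B\res n$) is routine but must be done carefully to match the asymmetric honesty requirements in clauses (1) and (2).
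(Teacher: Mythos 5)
Your easy direction (not resilient $\Rightarrow$ $A\uplus B$ $t$-nonrandom) is exactly the paper's argument and is fine. The genuine gap is in the other direction, precisely where you flag your "main obstacle": you never resolve how to conclude that at least one of $h^B$, $g^A$ succeeds, and the route you sketch (savings account plus an additive split of the winnings into an $A$-part and a $B$-part) does not lead anywhere clean, because the additive increments of capital at even positions do not themselves form a martingale on $A$ — turning "unbounded additive gain on the even sublayer" into a successful oracle martingale on $A$ is essentially the whole difficulty restated. The missing idea is to make the decomposition \emph{multiplicative}: define $h^Y(X\res n)=\frac{d((X\uplus Y)\res 2n-1)}{d((X\uplus Y)\res 2n-2)}\,h^Y(X\res n-1)$ and $g^X(Y\res n)=\frac{d((X\uplus Y)\res 2n)}{d((X\uplus Y)\res 2n-1)}\,g^X(Y\res n-1)$. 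Each is a martingale because the one-step ratios of $d$ at the relevant positions sum to $2$ over the new bit, and the two products telescope to give $h^{B\res n-1}(A\res n)\cdot g^{A\res n}(B\res n)=c\cdot d((A\uplus B)\res 2n)$. Then $\max(h,g)\ge\sqrt{c\,d}$, so along any subsequence where $d\to\infty$ one of the factors has $\limsup=\infty$; no savings-account trick and no oscillation analysis is needed. Your worry about $d$ "gaining on even positions and losing on odd positions alternately" is an artifact of the additive framing and simply does not arise multiplicatively.

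A secondary inaccuracy: your justification that $h^Z(\sigma)=d(\sigma_0Z_0\sigma_1Z_1\cdots)$ "appropriately normalized" is a martingale "because $d$ is a martingale in its even coordinates when the odd coordinates are held fixed" is not correct as stated — $\sum_b d(\sigma_0Z_0\,b\,Z_1)\ne 2d(\sigma_0Z_0)$ in general, since the averaging property of $d$ applies only to its last bit. The fix is the same as above: $h$ must be built from the per-step ratios of $d$ taken exactly at the even (i.e., $A$-bit) positions, so that $h$ at length $n$ depends on $d$ at the odd-length prefix $2n-1$; this is also what makes the oracle use honest ($h^{B\res n-1}$ needs only $B\res n-1$) and keeps $h$ within the $O(t)$ budget.
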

\begin{proof}
Suppose that there exists a martingale $d$ which runs in time
$O(t(n))$ and witnesses the fact that $A\uplus B$ is restricted
$t$-nonrandom. Now construct the oracle martingales $h$ and $g$ as
follows:
\begin{align*}
h^Y(\sigma)=g(\sigma)&=d(\lambda) \ \text{if } \sigma= \lambda \text{
  or }\sigma\in\Sigma\\ 
h^Y(X\res n)&=\frac{d(X\uplus Y\res 2n-1)}{d(X\uplus Y \res
  2n-2)}\cdot h^Y(X\res n-1)\\ 
g^X(Y\res n)&=\frac{d(X\uplus Y\res 2n)}{d(X\uplus Y \res 2n-1)}\cdot
g^X(Y\res n-1)\\ 
\end{align*} 
Clearly $h$ is dependent on $B\res n-1$ and $g$ is dependent on $A\res
n$. Since $\limsup_{n\to\infty}d(A\uplus B\res n)\uparrow \infty$, we
claim that one of $h$ and $g$ succeeds over $A$ and $B$ given $B\res
n-1$ and $A\res n$ respectively.  We have
\begin{multline*}
\limsup_{n\to\infty} h^{B\res n-1}(A\res n) \cdot g^{A\res n}(B\res n)
=\limsup_{n\to\infty} \frac{d(A\uplus B \res 2n)}{c}\\ 
\leq \limsup_{n\to\infty} h^{B\res n-1}(A\res
n)\cdot \limsup_{n\to\infty} g^{A\res n}(B\res n)
\end{multline*}
for some fixed constant $c$ (independent of $n$).  Note that LHS is
$\infty$ because $d(A\uplus B\res n)$ is a sequence which satisfies
the property $$2d(A\uplus B\res n-1)\geq d(A\uplus B\res n)\geq0$$ and
$\limsup_{n\to\infty}d(A\uplus B\res n)= \infty$. So one of the term
involving $h$ or $g$ has to go to $\infty$. Now we show that $h$ and
$g$ are oracle martingales which run in time $O(t(n))$. By
construction $h$ and $g$ are oracle functions computable in time
$O(t(n))$. Now
\begin{multline*}
\sum_{b \in \Sigma} h((A \res n) b) 
= \frac{h(A\res n)}{d(A\uplus B\res 2n)}\sum_{b\in \Sigma}d((A\uplus
B\res 2n)b) 
= 2\cdot h(A\res n)
\end{multline*}
and thus $h$ is a oracle martingale. By a similar argument $g$ will
also become a oracle martingale which runs in time $O(t(n))$. Since
either $\limsup_{n\to\infty}h^{B\res n-1}(A\res n)= \infty$ or
$\limsup_{n\to\infty}g^{A\res n}(B\res n)= \infty$, it follows that
$(A,B)$ is a not a $t$-resilient pair.

Conversely, if $(A,B)$ is not a $t$-resilient pair then either there
is a oracle martingale $h$ runs in time $O(t(n))$ such that
$\limsup_{n\to\infty}h^{B\res n-1}(A\res n)= \infty$, or a oracle
martingale $g$ runs in time $O(t(n))$ such that $\limsup_{n \to
  \infty} g^{A\res n}(B \res n) = \infty$. If the first condition
holds, then
\begin{align*}
d(A\uplus B\res 2n-1)&=h^{B\res n-1}(A\res n),\\
d(A\uplus B\res 2n)&= d(A\uplus B\res 2n-1)
\end{align*}
is a $t$-martingale witnessing that $A \uplus B$ is $t$-nonrandom. If
the second condition holds then we can define
%% \begin{align*}
%% d(A\uplus B\res 2n)&=g^{A\res n}(A\res n),\\
%% d(A\uplus B\res 2n+1)&= d(A\uplus B\res 2n)
%% \end{align*}
a similar martingale $d$ based on $g$, witnessing $t$-nonrandomness of
$A \uplus B$.
 \end{proof}

\section{A Modified Definition of Resource-bounded Martingales}
In this section, we propose an alternate definition of a time-bounded
martingale whose behavior with respect to van Lambalgen's theorem
is identical to the definition using time-bounded
prefix-complexity. In the light of van Lambalgen's theorem, we may
view this as a reasonable variant definition.

\begin{definition}
We say that a martingale $d: \Sigma^* \to \Rplus$ is a
\emph{$t$-bounded lookahead} martingale if there is for each string
$\sigma$, there is a set $L_{d,\sigma} \subseteq \N$ such that the
following conditions are satisfied.
\begin{enumerate}
\item $d(\lambda) = 1$ and $L_{d,0} = \emptyset$.
\item For any string $\sigma$, $d(\sigma  0) + d(\sigma 
  1) = 2d(\sigma)$.
\item For any string $\sigma \in \Sigma^{n-1}$, if $\begingroup
  n \endgroup \not\in L_{d,\sigma}$ then to compute
  $d(\sigma b)$, $b \in \Sigma$, the martingale can query a set of
  positions $S \subseteq \{0, \dots, n-2, n, \dots,
  O(t(n))\}$. Subsequently, $L_{d,\sigma b}$ is set to $L_{d,\sigma}
  \cup S$. If $n-1 \in L_{d,\sigma}$, then we forbid betting, and set
  $d(\sigma b)$ to $d(\sigma)$, and $L_{d,\sigma b}$ to $L_{d,\sigma}$.
\end{enumerate}
\end{definition}

\begin{definition}
  We say that an infinite sequence is \emph{$t$-lookahead-non-random}
  if there is a $t$-bounded lookahead martingale which succeeds on it.
\end{definition}

To compute $d(X \res n)$, the martingale is allowed to wait until an
appropriate extension length is available, and base its decision on a
few bits ahead. However, we have to be careful not to reveal $X_{n-1}$
itself, and to ensure that positions once revealed can never later be
bet on. These restrictions ensure that the betting game is not
trivial, and that there are unpredictable or random sequences.

\begin{lemma}
There is a $t$-lookahead random sequence $B$ and a $t^B$-lookahead
random sequence $A$ such that $A \uplus B$ is $t$-lookahead nonrandom.
\end{lemma}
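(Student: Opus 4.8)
The plan is to adapt the construction of Lemma~\ref{lem:t_martingale_one} to the lookahead setting, exploiting the same ``copy a distant block of $B$ into $A$'' idea while checking that the lookahead mechanism does not give oracle-free martingales on $A \uplus B$ any unfair advantage. Concretely, I would build $A$ and $B$ in stages, maintaining $|A_{s-1}| = t(s-1)$ and $|B_{s-1}| = 2^{t(s-1)^2}$ as before, and setting $A_s = A_{s-1}\alpha_s$, $B_s = B_{s-1}\beta_s\alpha_s$, so that the block $\alpha_s$ occurs both very early in $A$ (at positions around $t(s-1)$) and very late in $B$ (at positions around $2^{t(s)^2}$). The length bookkeeping is identical to the earlier lemma; the first step is to copy it over verbatim and record that $|\alpha_s| = t(s) - t(s-1)$.

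The second step is to choose $\beta_s$ and $\alpha_s$ so that no $t$-bounded lookahead martingale wins infinitely often on $B$, and no $t^B$-bounded lookahead martingale wins infinitely often on $A$. For the unconditional part I would fix a standard enumeration of $t$-bounded lookahead martingales $\{d_i\}$ and, at stage $s$, diagonalize against $d_1, \dots, d_s$: because a lookahead martingale still satisfies the exact martingale equality $d(\sigma 0) + d(\sigma 1) = 2 d(\sigma)$, the Markov-inequality argument used in Lemma~\ref{lem:t_martingale_one} still shows that, for any prefix $\kappa$ and any length $n$, the set of $\rho \in \Sigma^n$ on which no $d_i$ ($i \le s$) ever increases its capital above $d_i(\kappa)$ has positive probability; so a suitable $\beta_s$ exists, and the key extra point to verify is that the lookahead into positions $\le O(t(n))$ beyond the current one does not help, since $|\beta_s|$ and the ambient block structure are chosen long enough that the lookahead window stays inside the ``random'' region we are building. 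The same argument relativized to $B_{s-1}\beta_s$ produces $\alpha_s$, giving $t^B$-lookahead-randomness of $A$ by the usual honesty/time argument: to read the copy of $\alpha_s$ sitting at position $\approx 2^{t(s)^2}$ in $B$, an oracle lookahead martingale reading the block $\alpha_s$ in $A$ at position $\approx t(s)$ would need to query index $2^{\omega(t(n))}$ of $B$, which exceeds its $O(t(n))$ lookahead/time budget.

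The third step is to exhibit a $t$-bounded \emph{lookahead} martingale that succeeds on $A \uplus B$. Since $\alpha_s$ appears in $A$ at an early position and is literally a contiguous block of the suffix of $B$ appearing much later, the interleaved sequence $A \uplus B$ contains, at the late positions corresponding to the $B$-copy of $\alpha_s$, bits that are fully determined by an earlier prefix of $A \uplus B$; a martingale that reads that earlier prefix and then bets all its capital correctly on each bit of the later copy doubles its capital $|\alpha_s|$ times per stage, hence succeeds. The one subtlety, and the step I expect to be the main obstacle, is the clause in the definition of a $t$-bounded lookahead martingale that forbids betting on any position that has previously been \emph{queried}: I must ensure the reading phase only queries positions strictly earlier than the block of late bits we intend to bet on, and that those late positions were never queried at an earlier stage either. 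This is arranged by choosing, at each stage, a fresh untouched block to bet on and only ever reading \emph{backwards} from it; because the copies $\alpha_s$ for distinct $s$ live in disjoint, widely separated regions, the sets $L_{d,\sigma}$ accumulated so far avoid the block we are about to bet on. Verifying this bookkeeping — that the query sets stay disjoint from the future betting positions across all stages — is the delicate part, and once it is checked the martingale is manifestly $t$-bounded and succeeds on $A \uplus B$, completing the proof.
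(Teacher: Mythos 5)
Your proposal is correct and takes essentially the same route as the paper, which proves this lemma simply by observing that the construction of Lemma~\ref{lem:t_martingale_one} (the staged copying of a block $\alpha_s$ from an early position of $A$ to a far later position of $B$) carries over to the lookahead setting. Your additional checks --- that the exact fairness condition preserves the Markov-inequality selection of $\beta_s$ and $\alpha_s$, and that the succeeding martingale on $A \uplus B$ only reads backwards so the sets $L_{d,\sigma}$ never block the positions it later bets on --- are exactly the lookahead-specific points the paper leaves implicit.
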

The proof is essentially the same as that of Lemma
\ref{lem:t_martingale_one}.

With the modified definition, we can now prove result similar to Lemma
\ref{lem:t_incompressibility_one}. 
\begin{lemma}
If $B$ is $t$-lookahead nonrandom or $A$ is $t^B$-lookahead
nonrandom. Then $A \uplus B$ is $t$-lookahead nonrandom.
\end{lemma}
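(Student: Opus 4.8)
The plan is to adapt the proof of Lemma~\ref{lem:t_incompressibility_one} to the lookahead-martingale setting, handling the two hypotheses separately. First suppose $B$ is $t$-lookahead nonrandom, witnessed by a $t$-bounded lookahead martingale $d_B$ with associated lookahead sets $L_{d_B,\sigma}$. I would define a $t$-bounded lookahead martingale $d_{AB}$ on $A\uplus B$ that simulates $d_B$ on the odd coordinates (the $B$-bits) and bets trivially on the even coordinates (the $A$-bits), exactly as in equation~(\ref{eqn:martingale_B_AB}). The key point to check is that the honesty/lookahead constraints are preserved: a query that $d_B$ makes to position $j$ of $B$ translates to a query to position $2j+1$ of $A\uplus B$, and a query to position $j$ of $A$ (if we ever needed one, which we don't in this case) to position $2j$; since the time bound $t$ is superlinear and the coordinate map $j\mapsto 2j+1$ is linear, a query within $O(t(j))$ positions of $B$ lands within $O(t(2j+1))$ positions of $A\uplus B$, so the window constraint $S\subseteq\{0,\dots,n-2,n,\dots,O(t(n))\}$ is respected after rescaling the constant. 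One must also verify that the ``forbidden position'' bookkeeping is consistent: since $d_{AB}$ only ever bets on odd positions and only queries positions that $d_B$ would have queried (mapped through the embedding), no position is bet on after being revealed, so $L_{d_{AB},\cdot}$ can be defined as the image of $L_{d_B,\cdot}$ together with the even positions, which are never bet. Then $d_{AB}$ succeeds on $A\uplus B$ because its capital along $A\uplus B$ equals that of $d_B$ along $B$.

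The second case, $A$ being $t^B$-lookahead nonrandom, is the more delicate one and is where I expect the main obstacle to lie. Here we have an oracle $t$-bounded lookahead martingale $d^B$ succeeding on $A$, where on input $A\res n$ it may query $B\res m$ for some $m$ depending on $n$ (and, by the honesty/lookahead convention, $m=O(t(n))$ up to appending redundant queries, so $m\ge n$). I would build a plain (non-oracle) $t$-bounded lookahead martingale $d_{AB}$ on $A\uplus B$: when $d_{AB}$ is about to bet on the $A$-bit sitting at position $2i$ of $A\uplus B$ (i.e.\ on $A_i$), it needs to know $B\res m$; but by the lookahead mechanism, those bits of $B$ occupy the \emph{odd} positions $1,3,\dots,2m-1$ of $A\uplus B$, all of which are $\le 2m-1 = O(t(2i))$ positions ahead, hence inside the allowed lookahead window. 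So $d_{AB}$ waits until length roughly $2m$ is available, reads off $B\res m$ from the odd coordinates, simulates $d^{B\res m}$ on $A\res(i+1)$, and scales its capital accordingly; on all odd positions $d_{AB}$ bets trivially. The obstacle is precisely the ``do not reveal $X_{n-1}$'' and ``once revealed, never bet again'' constraints: to bet on $A_i$ at position $2i$ we must query position $2i-1$, which is $B_{i-1}$, an odd coordinate we have already declared we bet trivially on -- that is fine -- but we must make sure that when later we reach an odd position $2j-1$ with $j\le m$ we do not attempt to bet on a $B$-bit we have already queried while betting on some earlier $A_i$. Since $d_{AB}$ never bets on odd positions at all, this is automatically satisfied; the only real constraint to verify is that betting on $A_i$ does not require querying position $2i-1$ itself in a way that reveals $A_{i-1}$ prematurely -- but the lookahead definition explicitly allows querying all positions $<n-1$, so reading the already-fixed prefix is permitted. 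Thus the forbidden-set bookkeeping again goes through, with $L_{d_{AB},\cdot}$ the union of the positions queried so far.

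Having handled both cases, the conclusion is immediate: in either case we have exhibited a $t$-bounded lookahead martingale succeeding on $A\uplus B$, so $A\uplus B$ is $t$-lookahead nonrandom. The step I would flag as requiring the most care is confirming that the honesty window in the lookahead definition is compatible with the coordinate-doubling of the interleaving -- concretely, that ``$A$ can consult $B\res m$'' under the oracle-lookahead convention really does mean $m$ is within a $t$-bounded window of $n$, so that after the map $j\mapsto 2j$ it is still within a $t$-bounded window of $2n$ (using superlinearity of $t$ and time-constructibility). This is exactly the asymmetry that made Lemma~\ref{lem:t_martingale_one}'s converse fail for ordinary $t$-bounded oracle martingales, and the point of the lookahead definition is that the oracle bits are now \emph{physically present} in the interleaved string within the lookahead horizon, so the obstruction disappears; I would make this precise as the heart of the argument.
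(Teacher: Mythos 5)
Your proposal is correct and follows essentially the same route as the paper's proof: in both cases you embed the lookahead (and oracle) queries into the doubled coordinates of $A \uplus B$, bet trivially on the opposite parity, and track the forbidden sets as images of the simulated martingale's query sets, which is exactly what the paper does with $L_{d,\cdot}$ and $Q_{g,X,Y,n}$. Your explicit verification that the oracle bits of $B$ lie within the $O(t(\cdot))$ lookahead window after the map $j \mapsto 2j+1$ is precisely the point the lemma turns on.
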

\begin{proof}
Suppose $h$ is a $t$-lookahead-martingale that succeeds on $B$. Then
define the $t$-lookahead martingale $d$ by setting $d(\lambda)=1$ and
$L_{d,\lambda} = \emptyset$, and
\begin{align*}
d((X \uplus Y) \res 2n+1) &= d((X \uplus Y) \res 2n), \quad
  L_{d,(X \uplus Y) \res
    2n+1} = L_{d,(X \uplus Y) \res 2n}\\
d((X \uplus Y) \res 2n+2) &= h(Y \res n), \quad L_{d,(X \uplus Y) \res
    2n+2} = \{2i+1 | i \in L_{h, Y \res n}\}.
\end{align*}
Then clearly $A \uplus B \in S^\infty[d]$ as $B \in S^\infty[h]$.

Now, assume that $A \in S^\infty[g^B]$ for a $t$-lookahead martingale
$g$. Then we define the $t$-lookahead martingale $d$ by $d(\lambda)=1$
with $L_{d,\lambda} = \emptyset$ and 
\begin{align*}
d((X \uplus Y) \res 2n+2) &= d((X \uplus Y) \res 2n+1), \quad L_{d,(X
  \uplus Y) \res 2n+2} = L_{d,(X \uplus Y) \res 2n+1}\\ 
d((X \uplus Y) \res 2n+1) &= g^{Y \res O(t(n))}(X \res n),\\
           L_{d,(X \uplus Y) \res 2n+1} 
           &= L_{d,(X \uplus Y) \res 2n} 
               \cup \{2i|i\in L_{g, X \res n}\}
               \cup \{2i+1|i \in Q_{g, X, Y, n}\},
\end{align*}
where $Q(g,X,Y,n)$ are the bits in the oracle queried by $g^Y(X \res
n)$. We know that $A \in S^\infty[g^B]$. Hence $A \uplus B \in
S^\infty[d]$. 
 \end{proof}

\section*{Acknowledgments}
The authors thank Jack Lutz, Manjul Gupta and Michal Kouck{\'y} for
helpful discussions.

\bibliography{main,dim,fair001}
\end{document}